\documentclass[]{iopart}
%
\usepackage{graphicx}
\usepackage{bm}
\usepackage{amssymb}

\newcommand{\Exp}[1]{\,\mathrm{e}^{\mbox{\footnotesize$#1$}}}
\newcommand{\I}{\mathrm{i}}
\newcommand{\floor}[1]{\left\lfloor #1 \right\rfloor}
\newcommand{\ket}[1]{|#1\rangle}
\newcommand{\bra}[1]{\langle#1|}
\newcommand{\expectn}[1]{\langle#1\rangle}
\def \del{\partial}
\setlength{\arraycolsep}{2pt}
\newenvironment{eqnArray}{\begin{equation}\begin{array}[b]{rcl}}
{\end{array}\end{equation}}
\newcommand{\openone}{\mathbb{I}}
\def \algc{{\cal C}}
\def \cK{{\cal K}}
\def \cH{{\cal H}}
\def \cD{{\cal D}}
\def \cA{{\cal A}}
\def \alga{{\cal A}}
\def \algqbit{{\cal A}_{qbit}}
\def \algbofh{{\cal B}({\cal H})}

\def \bbr{{\mathbb R}}
\def \bbc{{\mathbb C}}
\def \bbz{{\mathbb Z}}
\def \bbn{{\mathbb N}}

\newtheorem{theorem}{Theorem}[section]
\newtheorem{lemma}[theorem]{Lemma}
\newtheorem{proposition}[theorem]{Proposition}
\newtheorem{corollary}[theorem]{Corollary}

\newenvironment{proof}[1][Proof:]{\begin{trivlist}
\item[\hskip \labelsep {\bfseries #1}]}{\end{trivlist}}

\newenvironment{definition}[1][Definition]{\begin{trivlist}
\item[\hskip \labelsep {\bfseries #1}]}{\end{trivlist}}

\newcommand{\qed}{\nobreak \ifvmode \relax \else
      \ifdim\lastskip<1.5em \hskip-\lastskip
      \hskip1.5em plus0em minus0.5em \fi \nobreak
      \vrule height0.75em width0.5em depth0.25em\fi}

\begin{document}
\title[Qubit subalgebra and tensor product in angular momentum system]%
{Qubit subalgebra and tensor product in Weyl algebra of angular momentum system}
\author{Jun Suzuki}

\address{%
Graduate School of Information Systems, %
The University of Electro-Communications, %
1-5-1 Chofugaoka, Chofu, Tokyo 182-8585, Japan}

\ead{junsuzuki@is.uec.ac.jp}

\begin{abstract}
We analyze Weyl algebra of quantum angular momentum system and 
construct qubit subalgebra out of it. We show that the commutant of this qubit subalgebra 
is isomorphic to the original algebra and prove the tensor product structure between 
qubit subalgebra and its commutant. This construction can be iterated to 
construct arbitrary number of qubit subalgebras from a single quantum system. 
We show a simple experimental realization of this proposed scheme using 
orbital angular momentum of single photons. 
We briefly discuss about construction of qudit subalgbra and 
generalization to other infinite dimensional systems. 
\end{abstract}

\pacs{03.65.-w,03.65.Fd}  
\ams{08A30}               

\renewcommand{\submitto}[1]{\vspace{28pt plus 10pt minus 18pt}
     \noindent{\small\textrm{Posted on the arXiv on #1}}}
\submitto{4 Dec 2014}

\section{Introduction}\label{sec1}
This paper addresses a problem of constructing qubit subsystems 
out of a single quantum system of infinite dimensional Hilbert spaces. 
This problem is important not just from theoretical studies but 
from a practical point of view. This is partly because real physical 
systems are often described by infinite dimensional Hilbert space. 
The other reason is, as we shall show in this paper, 
that infinite dimensional systems might be 
more useful than finite dimensional systems with a fixed dimension 
when we encode many qubits in a single physical system. 

There have been many activities on the subject in past, see for example 
Refs.~\cite{GKP, BGS,RCMMG,MLGWRN}. 
In previous studies, however, constructed ``qubits'' are only resemble to 
a true qubit system which will be defined in this paper. 
Another important point which will be addressed in this paper 
is that constructed qubit subsystem has a tensor product structure to the rest 
rather than a direct sum structure. The former construction scheme is referred 
to as a subsystem encoding and the latter is known as a subspace encoding in literature. 
This point becomes crucial when one wishes to construct 
several qubit subsystems out of a single physical system. 
The subspace encoding is in general suffering from a leakage problem when is exposed to noise. 
The subsystem encoding is thus superior encoding scheme from practical point of view. 

The subsystem encoding has been discussed for finite dimensional systems in 
the general setting and there are many interesting results known on 
the subject \cite{Z01,VKL, ZLL,BKOSV,GBRWKL,TBKN}. 
For infinite dimensional systems, however, little results are known based on physical models 
\footnote{We have noticed the reference \cite{LS04} which attempts to generalize 
the result of \cite{Z01,ZLL} to infinite dimensional systems. Ref.~\cite{LS04} 
focuses on the structure of Hilbert space itself rather than algebraic aspects.}. 
We note that this problem is a study of subalgebra from mathematical point of view 
and there are general results known in $*$-algebra. 
Yet, these results are usually abstract and are not connected to any physical models. 

In the previous publication \cite{RKSE10}, we proposed an encoding scheme of 
constructing many qubits out of a single rotor system 
and gave possible realizations of our scheme using orbital angular momentum of single photons. 
We showed a single quantum system can potentially 
perform arbitrary quantum information processing protocols. 
As an application, we also showed quantum error correction is possible 
for a single photon system \cite{apj}. 
All results are, however, obtained based on physical intuition, 
and mathematical justification of our results has not been provided so far. 
It is our main objective to discuss details of our proposed scheme. 

We start from a von Neumann algebra $\alga$ on an infinite dimensional 
Hilbert space $\cH$ \cite{araki,Tbook,UOHbook}. 
We construct a $*$-subalgebra $\alga_{qbit}\subset \alga$ 
which is $*$-isomorphic to a matrix algebra on $\mathbb{C}$. 
We then seek a condition which purports to introduce a tensor product structure 
for the subalgebra $\alga_{qbit}$. 
We note the necessary and sufficient condition 
is already derived for finite dimensional Hilbert spaces \cite{Z01,ZLL}.  
To solve this problem, we construct a qubit subalgebra from 
the Weyl algebra of quantum angular momentum system 
and examine properties of this subalgebra in detail. 
For infinite dimensional systems, the main ingredient to analyze algebraic properties 
is the commutant of $\alga_{qbit}$ defined by
\begin{equation}
\algqbit':=\{ b\in\alga\, |\, [b,a]=0,\forall a\in\alga_{qbit}\}. 
\end{equation}
We shall show that there exists a $*$-isomorphism 
from $\algqbit\otimes\algqbit'$ to $\alga$ to 
conclude a tensor product structure $\alga\cong\algqbit\otimes\algqbit'$. 

Upon proving the above statement, we note that C$^*$-algebra is 
too general and rather the original algebra needs to be a von Neumann algebra. 
This is because a tensor product for a von Neumann algebra is unique whereas 
it is not for C$^*$-algebra. Thus, it is important to study a von Neumann algebra 
for infinite dimensional systems for our purpose. 

As the second result, we show that the commutatnt is 
$*$-isomorphic to the original algebra $\alga$, i.e., $\algqbit'\cong\cA$ and 
this provides a simple encoding scheme to construct arbitrary number of qubit subalgebras 
out of a single physical system as 
\begin{equation} \label{ntensor}
{\cal A}\cong {\cal A}_{qbit} \otimes{\cal A}\cong
 \underbrace{{\cal A}_{qbit} \otimes{\cal A}_{qbit} \otimes\cdots\otimes{\cal A}_{qbit}}_n \otimes {\cal A}.
\end{equation}

This paper is organized as follows. 
Section \ref{sec2} provides definitions of algebras under consideration and 
basic properties of them. 
Section \ref{sec3} analyzes the qubit subalgebra of quantum angular momentum algebra. 
Section \ref{sec4} proves our main claim, a tensor product structure for 
qubit subalgebra and its commutant. 
Section \ref{implementation} gives a short account on how to implement our qubit subalgebra 
with orbital angular momentum degrees of freedom of single photons. 
Section \ref{sec6} discusses a possible extension of our proposal briefly and 
the last section summarizes our result. 

\section{Preliminaries}\label{sec2}
\subsection{Qubit algebra}\label{qubitalg}
In this paper, we define a qubit algebra as a $2\times2$ matrix algebra on $\mathbb{C}$. 
\begin{definition}(Qubit algebra)
\begin{equation}  
M_2(\mathbb{C}):=\mathrm{span}_{\bbc}\{I, \sigma_1,\sigma_2,\sigma_3\} 
\end{equation}
where $I$ is the identity matrix and 
$\sigma_{j}$ ($j=1,2,3$) are usual Pauli spin operators satisfying 
the following conditions for $j,k=1,2,3$: 
\begin{eqnArray}\nonumber
\sigma_{j}&=&\sigma_{j}^{*},\\ 
\sigma_j \sigma_k&=&\delta_{jk} I+\I \sum_{\ell=1,2,3}\epsilon_{jk\ell}\ \sigma_{\ell}.
\end{eqnArray}
Here $\sigma^*$ denotes hermite conjugation of $\sigma$, 
$\delta_{jk}$ is the Kronecker delta, and $\epsilon_{jk\ell}$ is the totally antisymmetric tensor. 
\end{definition}
The notation $\sigma_{\pm}=(\sigma_1\pm \I\sigma_2)/2$ is also used for convenience. 
The standard matrix (irreducible) representations for $\sigma_j$ on $\mathbb{C}$ are
\begin{equation} \label{pauli}
\sigma_1\!=\!\left(\begin{array}{cc} ~0 & ~1 \\ ~1 & ~0\end{array}\right),\ 
\sigma_2\!=\!\left(\begin{array}{cc} ~0 & -i \\ ~i & ~0\end{array}\right),\ 
\sigma_3\!=\!\left(\begin{array}{cc} ~1 & ~0 \\ ~0 & -1\end{array}\right).
\end{equation}
We call a subalgebra $\algqbit$ is a qubit algebra when $\algqbit$ is $*$-isomorphic to the algebra $M_2(\mathbb{C})$ 
throughout our discussion. 

\subsection{Angular momentum algebra}

Quantum angular momentum operator $L$ about a given axis is described by 
the differential operator with respect to angle $\theta$ as 
$L=-\I \del/\del\theta$ whose domain is a dense subset of 
the Hilbert space of all square integrable functions on a torus: 
\begin{equation}
L^2([0,2\pi)):=\left\{ f(\theta)\,|\, f:[0,2\pi)\to\mathbb{C},\, f(\theta+2\pi)=f(\theta)\right\}.
\end{equation}
The inner product for $\cH$ is
$\expectn{f|g}=\int_0^{2\pi}\! d\theta\ \overline{f(\theta)}g(\theta)$, 
$f,g\in L^2([0,2\pi))$ with a bar denoting complex conjugation. 
It is known that the complete orthonormal system (CONS) for $L^2([0,2\pi))$ is 
$\{\mathrm{e}_\ell(\theta):=\Exp{\I \ell \theta}/\sqrt{2\pi}\}_{\ell\in\mathbb{Z}}$, 
and any element $f(\theta)$ can be expanded uniquely by this basis as 
$f(\theta)=\sum_{\ell\in\mathbb{Z}}c_\ell \mathrm{e}_\ell(\theta)$. 
We consider a Hilbert space: 
\begin{equation} \label{Hlz}
l^2(\mathbb{Z}):=\left\{\sum_{\ell\in\mathbb{Z}}c_\ell \ket{\ell} \,\Big|\, c_\ell\in\mathbb{C}, \sum_{\ell\in\mathbb{Z}}|c_\ell|^2<\infty  \right\}, 
\end{equation}
and the inner product for it by 
\begin{equation} \label{innerH}
\expectn{\psi |\phi} 
=\sum_{\ell\in\mathbb{Z}}\overline{\psi_\ell}\phi_\ell
\end{equation} 
for $\ket{\psi}=\sum_{\ell\in\mathbb{Z}}\psi_\ell\ket{\ell}, 
\ket{\phi}=\sum_{\ell\in\mathbb{Z}}\phi_\ell\ket{\ell} 
\in l^2(\mathbb{Z})$ and CONS is $\{\ket{\ell}\}_{\ell\in\mathbb{Z}}$. 
Mathematically, two Hilbert spaces are equivalent and we denote 
the Hilbert space for quantum angular momentum system by 
$\cH:=l^2(\mathbb{Z})\cong L^2([0,2\pi))$. 

To avoid a domain problem when dealing with the unbounded operator $L$, 
we study the following unitary operator generated by $L$: 
\begin{equation}
U(\theta):=\Exp{\I \theta L},
\end{equation}
which constitutes (abelian) one-parameter unitary group, 
{$\{ U(\theta)\,|\,\theta\in[0,2\pi)\}$} whose element $U(\theta)$ satisfies 
$U(\theta)^*=U(-\theta), U(\theta)U(\theta')=U(\theta+\theta')$, 
and $U(\theta+2\pi)=U(\theta)$. 
The other fundamental unitary operator is defined by 
\begin{equation}
V:=\Exp{\I\theta},
\end{equation}
which shifts the value of angular momentum by one as 
$V\Exp{\I \ell \theta}=\Exp{\I (\ell+1) \theta}$. 
These two unitaries constitute a Weyl pair and its commutation relation is given by 
\begin{equation} \label{weyl}
U(\theta)V^\ell=\Exp{\I\ell\theta}V^\ell U(\theta). 
\end{equation}

We first define an algebra generated by 
a linear hull of all Weyl pairs:
 \begin{equation}
 \alga_0:=\mathrm{span}_{\mathbb{C}}\{W(\theta,\ell) \,|\,\theta\in[0,2\pi),\ell\in\mathbb{Z} \},
\end{equation}
with 
\begin{equation}
W(\theta,\ell):=\Exp{-\I\ell\theta/2}U(\theta)V^{\ell}.   
\end{equation}
These unitary operators satisfy the following relations, 
\begin{eqnArray}\nonumber
W(\theta,\ell)^*&=&W(-\theta,-\ell),\\ \nonumber
W(\theta,\ell)W(\theta',\ell')&=&\Exp{\I(\ell'\theta-\ell\theta')/2}W(\theta+\theta',\ell+\ell'),\\
W(0,0)&=&\openone,
\end{eqnArray}
where $\openone$ denotes the identity operator on $\cH$. 
Another important property of the Weyl pair for quantum angular momentum system is 
that they are linearly independent, i.e., 
\[
\forall \theta_j\neq\theta_{j'}(j\neq j'),\quad \sum_{j,\ell} c_{\ell}(\theta_j)W(\theta_j,\ell)=0
\]
implies $c_{\ell}(\theta_j)=0$ for all $\ell$ and $\theta_j$. 

In the following, we assume an abstract algebra defined in terms of $W(\theta,\ell)$ 
and analyze algebraic properties. 
The Weyl algebra for quantum angular momentum is defined by
taking a $\sigma$-weak closure of $ \alga_0$:
 \begin{definition} \label{defamalge}(Angular momentum algebra on $\cH$)
 \begin{eqnArray}
 \alga&:=&\overline{\alga_0}^{\sigma w} \\ \nonumber
&=&\overline{ \mathrm{span}_{\mathbb{C}}\{W(\theta,\ell) \,|\,\theta\in[0,2\pi),\ell\in\mathbb{Z} \}}^{\sigma w}.
\end{eqnArray}
\end{definition}
Here the closure of an algebra $\overline{\alga}^{\sigma w} $ 
means we have included elements which are not contained 
in the original algebra $\cA$, but the limit 
under $\sigma$-weak topology (also called a physical topology \cite{araki}). 
With this additional treatment, $\alga$ becomes a von Neumann algebra 
where the double commutant theorem holds, i.e., $\cA'':=(\cA')'=\cA$. 
This subtlety is crucial to utilize many of powerful techniques in von Neumann algebras. 
To avoid being overloaded with mathematics, this technical point will not be emphasized 
unless otherwise noted. 

An important fact about the Weyl algebra (\ref{defamalge}) is 
that it coincides with the set of all bounded operators on $\cH$, 
i.e., $\alga=\algbofh$. This is shown simply by noting 
the commuing operator with all elements of $\alga$ is only 
a multiple of the identity operator, that is $\alga'=\{c\openone|c\in\mathbb{C}\}$ holds. 
Taking commutant of both sides and use $\alga''=\alga$ (von Neumann algebra), 
we get $\alga=\algbofh$. 

\section{Qubit subalgebra}\label{sec3}
We define the following operators \cite{englert06,RKSE10}: 
\begin{eqnArray}\nonumber
a_+&:=&\frac{1}{2} W(0,-1)-\frac{\I}{2} W(\pi,-1) =\frac12 (1-\Exp{\I\pi L} )V^*,\\[1ex] \nonumber
a_-&:=&\frac{1}{2} W(0,1)+\frac{\I}{2} W(\pi,1)=\frac12 (1+\Exp{\I\pi L} )V  ,\\[1ex]
a_3\, &:=&W(\pi,0)=\Exp{\I\pi L} .
\end{eqnArray}
As noted before, we have $a_1=a_+ +a_-$ and $a_2=-\I a_+ +\I a_-$. 
It is then straightforward to verify that $a_1,a_2,a_3$ and $\openone$ satisfy 
the same algebraic relation as the qubit algebra (\ref{qubitalg}). 
We define the $*$-subalgebra on $\cH$: 
\begin{definition}(Qubit subalgebra on $\cH$)\\ 
$\algqbit:=\mathrm{span}_{\mathbb{C}}\{ \openone,a_1,a_2,a_3\}$ 
\end{definition}
We can easily construct $*$-homomorphism 
$\pi: M_2(\mathbb{C})\to \algqbit$ and we can show this mapping 
is faithful, i.e., $\mathrm{ker}(\pi)=\{0\}$, to get the following result. 
\begin{lemma}
The subalgebra $\algqbit$ $*$-isomorphic to 
the qubit algebra $M_2(\mathbb{C})$, i.e., 
$\algqbit\cong M_2(\mathbb{C})$. 
\end{lemma}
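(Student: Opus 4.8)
The plan is to make precise the $*$-homomorphism $\pi$ alluded to above and then argue that it is automatically faithful. Define $\pi:M_2(\mathbb{C})\to\algbofh$ on a linear basis by $\pi(I)=\openone$ and $\pi(\sigma_j)=a_j$ ($j=1,2,3$), extended $\mathbb{C}$-linearly. That $\pi$ is a well-defined unital $*$-homomorphism is exactly the assertion, already noted above, that $\openone,a_1,a_2,a_3$ obey the same self-adjointness and multiplication rules $\sigma_j=\sigma_j^*$, $\sigma_j\sigma_k=\delta_{jk}I+\I\sum_\ell\epsilon_{jk\ell}\sigma_\ell$ as the Pauli generators; this is checked directly from the Weyl relation $W(\theta,\ell)W(\theta',\ell')=\Exp{\I(\ell'\theta-\ell\theta')/2}W(\theta+\theta',\ell+\ell')$ together with the facts that $a_3=\Exp{\I\pi L}$ is a self-adjoint involution and that $\Exp{\I\pi L}V^{\pm1}=-V^{\pm1}\Exp{\I\pi L}$. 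By construction the image $\pi(M_2(\mathbb{C}))$ equals $\mathrm{span}_{\mathbb{C}}\{\openone,a_1,a_2,a_3\}=\algqbit$, so it remains only to see that $\pi$ is injective.

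For injectivity I would invoke simplicity of the matrix algebra: $M_2(\mathbb{C})$ has no two-sided ideal other than $\{0\}$ and itself. Since $\ker\pi$ is a two-sided ideal of $M_2(\mathbb{C})$, and $\pi(I)=\openone\neq0$ forces $\ker\pi\neq M_2(\mathbb{C})$, we get $\ker\pi=\{0\}$. Hence $\pi$ is a $*$-isomorphism onto $\algqbit$, which is the claim.

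One may instead verify $\ker\pi=\{0\}$ directly, which has the merit of using the structure emphasized in Section~\ref{sec2}: it suffices to show $\openone,a_1,a_2,a_3$ are linearly independent operators on $\cH$. Expanding in Weyl generators, $\openone=W(0,0)$ and $a_3=W(\pi,0)$, while $a_1=a_++a_-$ and $a_2=-\I a_++\I a_-$ are built from $a_+$, a nontrivial combination of $W(0,-1)$ and $W(\pi,-1)$, and $a_-$, a nontrivial combination of $W(0,1)$ and $W(\pi,1)$. The six operators $W(0,0),W(\pi,0),W(0,\pm1),W(\pi,\pm1)$ carry pairwise distinct labels with $\theta\in\{0,\pi\}$, so by the linear independence of Weyl pairs recorded in Section~\ref{sec2} a vanishing combination $c_0\openone+c_1a_1+c_2a_2+c_3a_3=0$ forces every coefficient of every $W(\theta,\ell)$ to vanish; the $W(0,0)$ and $W(\pi,0)$ terms give $c_0=c_3=0$ at once, and matching the coefficients of $W(0,1)$ and $W(0,-1)$ yields an invertible $2\times2$ system in $(c_1,c_2)$, hence $c_1=c_2=0$. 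Either way, $\algqbit\cong M_2(\mathbb{C})$.

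I do not expect a genuine obstacle. The only substance is the routine-but-careful verification of the product relations for the $a_j$ — equivalently, that the particular combinations of Weyl pairs defining $a_+,a_-,a_3$ do not accidentally collapse — after which injectivity is essentially free. The one thing to keep honest throughout is the phase bookkeeping in passing between the $a_j$ and the $W(\theta,\ell)$.
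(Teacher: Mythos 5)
Your proposal is correct and follows the same route the paper sketches: define $\pi$ on the basis $\{I,\sigma_1,\sigma_2,\sigma_3\}$ using the verified Pauli relations for $\openone,a_1,a_2,a_3$, then establish faithfulness, which you supply either via simplicity of $M_2(\mathbb{C})$ or via linear independence of the Weyl operators $W(0,0)$, $W(\pi,0)$, $W(0,\pm1)$, $W(\pi,\pm1)$. Both faithfulness arguments are valid fillings-in of the step the paper merely asserts, so there is nothing to correct.
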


To see the property of the above qubit subalgebra, we observe that 
the actions of $a_\pm,a_3$ on $\ket{\ell}\in\cH$ are 
\begin{eqnArray} \nonumber
a_+\ket{\ell}&=&
\left\{\begin{array}{r}
\ket{\ell-1}\quad(\mbox{for even $\ell$})\\
0\hspace{1.14cm}(\mbox{for odd $\ell$})
\end{array}\right.,\\[2ex] \nonumber
a_-\ket{\ell}&=&
\left\{\begin{array}{r}
0\hspace{1.14cm}(\mbox{for even $\ell$})\\
\ket{\ell+1}\quad(\mbox{for odd $\ell$})
\end{array}\right.,\\[2ex]
a_3\,\ket{\ell}&=&(-1)^\ell\ket{\ell}. 
\end{eqnArray}
Splitting the Hilbert space into a direct sum $\cH=\cH_{\mathrm{e}}\oplus\cH_{\mathrm{o}}$ with 
$\cH_{\mathrm{e}}=\left\{\sum_{\ell\in\mathbb{Z}}c_\ell \ket{2\ell} \,|\, c_\ell\in\mathbb{C}, \sum_{\ell\in\mathbb{Z}}|c_\ell|^2<\infty  \right\}$ 
and similarly for $\cH_{\mathrm{o}}$, we see that $a_+$ ($a_-$) only acts on $\cH_{\mathrm{e}}$ ($\cH_{\mathrm{o}}$) space. 

As seen from the expression of $a_3$, this unitary operator 
is not trace-class when expressed in terms CONS on $\cH$. 
This is due to the fact that the sum 
$\sum_{\ell\in\mathbb{Z}} \expectn{\ell | a_3\ell}=\sum_{\ell\in\mathbb{Z}}(-1)^\ell$ 
converges to arbitrary numbers depending upon arrangement of terms.  
One then might conclude that one cannot define a proper trace for these 
``Pauli spin operators.'' To resolve this trace problem, it is important 
to notice that different representations exist for C$^*$-algebra. 
In particular, the powerful theorem due to Gelfand-Naimark-Segal (GNS) 
provides an explicit construction of representation for a given reference state. 
Following the standard GNS construction, see Appendix A, we can show 
that our qubit subalgebra has a representation which is two-dimension 
for a given reference state as follows.

We first find an element of $\cH=l^2(\mathbb{Z})$, 
which is an eigenstate of $a_3$ with the eigenvalue $1$: 
\begin{equation} \label{psi0}
a_3 \ket{\psi}=\ket{\psi},\quad \psi\in\cH. 
\end{equation}
Requiring the normalization condition, we get 
\begin{equation}
\ket{\psi}=\sum_{\ell\in\bbz}c_{\ell} \ket{2 \ell}, 
\end{equation}
where $\sum_{\ell\in\bbz}|c_{\ell}|^2=1$.
Let us consider a state $\omega_\psi$ on $\algqbit$, i.e., 
a positive and linear function $\omega_\psi: \algqbit \to \mathbb{C}$: 
\begin{equation}
\omega_\psi(a):= \expectn{\psi\,|\,a\psi}, 
\end{equation}
for $a\in\cA_{qbit}$. 
Here, the right hand side is the inner product defined on $\cH$, that is, Eq.~(\ref{innerH}). 
The kernel of this functional is easily found as
\begin{equation}
\cK_\psi= \mathrm{span}_{\mathbb{C}}\{a_-a_+, a_+ \}. 
\end{equation}
We denote an equivalent class of $a$ by this kernel as $[a]_\psi=\{ b\in\algqbit\,|\, b-a\in\cK_\psi\}$.  
Thus, the Hilbert space associated with the GNS construction with respect to 
the reference state $\omega_\psi$, 
\begin{equation}
\cH_\psi:= \algqbit/\cK_{\psi}= \mathrm{span}_{\mathbb{C}}\{a_+a_-, a_- \},
\end{equation}
is a (quotient) subalgebra of $\algqbit$ and is two-dimensional complex Hilbert space 
with respect to the inner product: 
\begin{equation}
\expectn{a , b}_\psi:=\omega_\psi(a^*b)=  \expectn{\psi\,|\,a^*b\psi},
\end{equation}
for $a, b\in\cH_\psi$. 

It is straightforward to see that the orthonormal basis for $\cH_\psi$ is
\begin{equation} 
\{ e_0, e_1\}, \ \mathrm{with}\ e_0\in[a_+a_-]_\psi,\ e_1\in[a_-]_\psi,    
\end{equation}
so that $\expectn{e_i , e_j}_\psi=\delta_{ij}$ holds. The action of $a\in\algqbit$ 
is determined by $\pi_\psi(a_i)$ ($i=1,2,3$) as 
\begin{eqnArray}
\pi_\psi(\openone)e_0=e_0,&\ \pi_\psi(\openone)e_1=e_1,\\[1ex] 
\pi_\psi(a_1)e_0=e_1,&\ \pi_\psi(a_1)e_1=e_0,\\[1ex] 
\pi_\psi(a_2)e_0=\I e_1,&\ \pi_\psi(a_2)e_1=-\I e_0,\\[1ex] 
\pi_\psi(a_3)e_0=e_0,&\ \pi_\psi(a_3)e_1=-e_1.
\end{eqnArray}
It is clear from the above expression that the GNS construction of a cyclic 
representation for $\algqbit$ coincides with the standard one given in (\ref{pauli}). 

A trace for C$^*$-algebra $\cA$ is a linear functional satisfying 
certain axioms. Let $ \cA_+:= \{ a\in\cA\,|\, a\ge0\}$ be a subalgebra consisting of positive elements 
and $\bbr_+$ be a set of non-negative real numbers. 
A trace is a linear functional $\tau:\ \cA_+\to \bbr_+$ such that 
the following axioms hold: 
\begin{eqnArray}\nonumber
\tau(c_1 a_1+c_2 a_2)&=c_1\tau(a_1)+c_2\tau(a_2),\\
\tau(a^*a)&=\tau(aa^*),\ (\forall a\in\cA),
\end{eqnArray}
for any $c_j\in\bbr_+$ and any  $ a_j\in\cA_+$. 
It is clear that 
this definition can be extended from $\cA_+$ to the whole algebra $\cA$. 
If we define a trace for $\algqbit$ by
\begin{equation} \label{qubittr}
\mathrm{Tr}_\psi(a):=\sum_{i=0,1}\expectn{e_i, ae_i}_\psi ,
\end{equation}
for all $a\in\algqbit$, it follows that this is a well-defined trace satisfying 
the above requirement. From the definition we can also write it as
\begin{equation}
\mathrm{Tr}_\psi(a)=\expectn{\psi |a\psi}+\expectn{\psi^\bot |a\psi^\bot},
\end{equation} 
where $\ket{\psi^\bot}=\sum_{\ell} c_\ell \ket{2\ell-1}$ is normalized orthogonal state of $\ket{\psi}$. 
In the above definition, the trace depends the reference state 
and the subscript $\psi$ is indicated. 
Using this formula, we see that 
\begin{equation}
\mathrm{Tr}_\psi(\openone)=2,\quad \mathrm{Tr}_\psi(a_i)=0,\ (i=1,2,3). 
\end{equation}

We remark that the above construction works well for other elements of $\cH$. 
For example, stating with any normalized state $\ket{\psi}=\sum_\ell c_\ell\ket{\ell}\in \cH$ with 
$\sum_{\ell}|c_\ell|^2=1$, one can construct two-dimensional representation which 
is unitary equivalent to the above representation. Other possibility is to 
start with more general state on $\cH$, i.e., $\rho=\sum_{\ell,\ell'}c_{\ell\ell'}\ket{\ell}\bra{\ell'}$ 
with $c_{\ell\ell'}\in\bbc$ satisfying the conditions, $\rho\ge0$ and $\sum_{\ell\ell} c_\ell=1$, 
and define a linear functional: 
\begin{equation}
\omega_\rho(a)=\sum_{\ell\in\bbz}\expectn{\ell \,|\,\rho a\ell}. 
\end{equation} 
This construction also works for every state. We note that the notation of pure states 
for qubit subalgebra is solely defined from the properties of the linear functional $\omega$. 
It is well-known that a state $\omega$ is pure if and only if a cyclic representation $\pi_\omega$ 
is irreducible. Other equivalent conditions are also known in literature \cite{araki,Tbook,UOHbook}. 
For the algebra under consideration, a mixed state on the original Hilbert space $\cH$ can 
be a pure state for qubit subalgebra. This relative notion of purity for quantum states 
are discussed before in literature \cite{TBKN,filippo}. 

Lastly, from physical point of view, the reference state associated with the GNS construction 
is nothing but an initialization of quantum states in experimental setups. 
One then builds up any desirable quantum states by applying unitary operations to it. 
A trace defined in Eq.~(\ref{qubittr}) links from mathematical formula (algebra) 
to observable quantities, such as expectation values, probabilities, and so on. 
This link between algebraic description and experimental implementation 
shall be discussed in Sec.~\ref{implementation}. 

\section{Commutant and tensor product}\label{sec4}
\subsection{Commutant of qubit subalgebra}
The commutant of $\algqbit$ is found by solving the linear equations 
for the coefficients $c_{j,\ell}\in\mathbb{C}$; 
\begin{equation}
 \big[a_i,\,  \sum_{j,\ell:\mathrm{finite}}c_{j,\ell}W(\theta_j,\ell)\big]=0,
\end{equation} 
for $i=1,2,3$ or equivalently commutativity with $a_{\pm}$ and $a_3$. 
From commutation relation with $a_{\pm}$, 
we observe that different values of $\ell$ do not affect each other 
and it is sufficient to analyze the commutation relations for a fixed value of $\ell$, i.e., 
$\left[a_{i},\,  \sum_{j:\mathrm{finite}}c_{j}W(\theta_j,\ell)\right]=0$ for $i=1,2,3$.   
From the commutativity with $a_3$, we see that 
the only possible values for $\ell$ are even numbers, 
and $\sum_{j:\mathrm{finite}}c_{j}W(\theta_j, 2\ell)$ commutes with $a_3$. 
Observe that linear combinations $\sum_{j:\mathrm{finite}}c_{j}W(\theta_j, 2\ell)$ with the same 
value of $\ell$ can be expressed as $(\sum_{j:\mathrm{finite}}c_{j}W(\theta_j, 0))V^{2\ell}$ 
and that the operators $V^{2\ell}=W(0,2\ell)$ commute with $a_{i}(\forall i)$. 
It thus suffices to analyze $\sum_{j:\mathrm{finite}}c_{j}W(\theta_j, 0)=
\sum_{j:\mathrm{finite}}c_{j}U(\theta_j)$.

We first analyze special solution to the commutant problem. 
Consider an operator $R=\sum_{j=1,2}c_{j}(\theta_j)U(\theta_j)$ 
with $c_j(\theta_j)$ being $2\pi$ periodic functions. 
In order for $R$ to commute with $a_{\pm}$, we can show 
that two angles need to satisfy the relations $\theta_1=\theta_2+\pi$. 
Furthermore, the ratio of two coefficients are fixed as 
\begin{equation}
c_1:c_2=\Exp{-\I\theta_1/2}\cos(\theta_1/2):\I\Exp{-\I\theta_1/2}\sin(\theta_1/2).
\end{equation}
With more analysis we see that the operator $R(\theta_1)$ can 
be set as a unitary operator 
and it forms a unitary group about a parameter $\theta_1$. 
For notational convenience, we define a unitary operator by 
\begin{equation}
U_1(\theta):=\Exp{-\I\theta/2}( \cos(\theta/2)W(\theta,0)+\I\sin(\theta/2)W(\theta+\pi,0)),
\end{equation}
which satisfies $U_1(\theta)U_1(\theta')=U_1(\theta+\theta')$, 
$U_1(0)=\openone$, and $U_1(\theta+\pi)=U_1(\theta)$, that is $U_1(\theta)$ is $\pi$ periodic. 
It is concluded that $U_1(\theta)$ forms a one-parameter unitary group. 
Define the shift unitary operator
\begin{equation}
V_1:=V^2,
\end{equation}
it is shown that two unitaries satisfy the same Weyl commutation relation (\ref{weyl}) as 
\begin{equation}
U_1(\theta)V_1^\ell=\Exp{\I2\ell\theta}V_1^\ell U_1(\theta). 
\end{equation}
It is natural to define the algebra generated by these two 
unitary operators as 
\begin{equation}
\alga_1:=\overline{\mathrm{span}_{\mathbb{C}}
\{\Exp{-\I\ell 2\theta/2}U_1(\theta)V_1^{\ell} \,|\,\theta\in[0,\pi),\ell\in\mathbb{Z} \}}^{\sigma w}.
\end{equation} 

We now show that $\algqbit$ and $\alga_1$ can generate 
the original Weyl algebra. 
\begin{lemma} \label{lem2-2}
(Totality of $\algqbit$ and $\alga_1$)\\
Let $\algqbit\vee\alga_1$ be the smallest algebra containing $\algqbit$ and $\alga_1$, 
then $\algqbit\vee\alga_1=\alga$ holds.  
\end{lemma}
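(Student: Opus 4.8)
The plan is to show that the smallest von Neumann algebra containing $\algqbit$ and $\alga_1$ is all of $\alga$, and since $\alga=\algbofh$ (as established in Section~\ref{sec2}), it suffices to show that the only bounded operator commuting with both $\algqbit$ and $\alga_1$ is a multiple of the identity; by the double commutant theorem this gives $(\algqbit\vee\alga_1)'=\bbc\openone$, hence $\algqbit\vee\alga_1=\alga$. Concretely, I would take an arbitrary $X\in\algbofh$ with $[X,a_i]=0$ for $i=1,2,3$ and $[X,U_1(\theta)]=0$, $[X,V_1]=0$ for all $\theta$, and deduce $X=c\openone$.

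First I would extract the consequences of commuting with the qubit subalgebra. Since $a_3$ has eigenvalues $\pm1$ on $\cH_{\mathrm{e}}$ and $\cH_{\mathrm{o}}$ respectively, $[X,a_3]=0$ forces $X$ to be block-diagonal with respect to $\cH=\cH_{\mathrm{e}}\oplus\cH_{\mathrm{o}}$; write $X=X_{\mathrm{e}}\oplus X_{\mathrm{o}}$. Then $[X,a_+]=0$ together with the explicit action $a_+\ket{2\ell}=\ket{2\ell-1}$, $a_-\ket{2\ell-1}=\ket{2\ell}$ (so that $a_+$ intertwines $\cH_{\mathrm{e}}\to\cH_{\mathrm{o}}$ and $a_-$ its inverse on ranges) identifies $X_{\mathrm{o}}$ with $X_{\mathrm{e}}$ under the natural unitary $\ket{2\ell}\mapsto\ket{2\ell-1}$. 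Concretely, writing matrix elements $\langle 2m|X|2\ell\rangle =: x_{m\ell}$, one gets $\langle 2m-1|X|2\ell-1\rangle = x_{m\ell}$ as well, so $X$ is completely determined by the single operator $X_{\mathrm{e}}$ on $\cH_{\mathrm{e}}\cong l^2(\mathbb Z)$.

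Next I would feed in commutativity with $\alga_1$. The operator $V_1=V^2$ acts on $\cH_{\mathrm{e}}$ as the shift $\ket{2\ell}\mapsto\ket{2\ell+2}$, so $[X_{\mathrm{e}},V_1]=0$ means $x_{m\ell}$ depends only on $m-\ell$; that is, $X_{\mathrm{e}}$ is a Laurent (convolution) operator, diagonalized by the Fourier transform $\cH_{\mathrm{e}}\cong L^2([0,2\pi))$ into multiplication by a bounded function $\hat x(\phi)$. It remains to use commutativity with the one-parameter group $U_1(\theta)$. The key point is to compute the action of $U_1(\theta)$ on $\cH_{\mathrm{e}}$: because $W(\theta,0)=U(\theta)=\Exp{\I\theta L}$ is diagonal in the $\ket\ell$ basis and $W(\theta+\pi,0)$ differs from it by the sign operator $a_3$, the operator $U_1(\theta)$ restricted to $\cH_{\mathrm{e}}$ acts as $\ket{2\ell}\mapsto \Exp{\I 2\ell\theta}\ket{2\ell}$ up to the scalar phase — i.e.\ as multiplication by $\Exp{\I 2\ell\theta}$, which on the Fourier side is a \emph{translation} of the variable $\phi$ by $2\theta$. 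An operator that is simultaneously a multiplication operator (from $V_1$) and commutes with all translations (from $U_1(\theta)$, $\theta\in[0,\pi)$, covering a full period) must have constant symbol $\hat x(\phi)\equiv c$, hence $X_{\mathrm{e}}=c\openone$ and therefore $X=c\openone$.

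The main obstacle I anticipate is bookkeeping rather than conceptual: correctly tracking the phase factors $\Exp{-\I\ell\theta/2}$ and the $\pi$- versus $2\pi$-periodicity so that the restriction of $U_1(\theta)$ to $\cH_{\mathrm{e}}$ is genuinely unitarily equivalent to the translation group with the right period, and verifying that the closures ($\sigma$-weak) are handled properly so that ``commutes with the generating unitaries'' really does imply ``commutes with all of $\alga_1$'' and conversely that $(\algqbit\vee\alga_1)'$ can be computed on generators. Once the $\cH_{\mathrm{e}}$ picture is set up as (multiplication operators) $\cap$ (translation-invariant operators) on $L^2$ of a circle, the conclusion $X=c\openone$ is immediate, and then $\algqbit\vee\alga_1=(\bbc\openone)'=\algbofh=\alga$.
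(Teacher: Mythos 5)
Your argument is correct, but it takes a genuinely different route from the paper. The paper's proof is constructive: it exhibits the original Weyl generators explicitly inside the algebra generated by $\algqbit$ and $\alga_1$ via the inversion formulas $U(\theta)=\Exp{-\I\theta/2}U_1(\theta)(\openone\cos(\theta/2)-a_3\sin(\theta/2))$ and $V=a_+V_1+a_-$, concludes $\alga_0\subset\algqbit\vee\alga_1$, notes the trivial converse inclusion, and takes the $\sigma$-weak closure. You instead compute the commutant $(\algqbit\cup\alga_1)'$ directly --- block-diagonalizing with $a_3$, identifying the two blocks via the intertwiners $a_\pm$, and using the shift $V_1$ together with the diagonal group $U_1(\theta)$ (which does act as $\ket{2\ell}\mapsto\Exp{\I 2\ell\theta}\ket{2\ell}$ on $\cH_{\mathrm{e}}$, the phases cancelling exactly) to force the remaining block to be scalar --- and then invoke the double commutant theorem plus the previously established fact $\alga=\algbofh$. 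Both are sound. The paper's approach buys the explicit formulas themselves, which are reused verbatim in the proof of Proposition~\ref{prop} to get injectivity of the tensor-product map; your approach avoids having to guess those formulas, is the standard ``trivial relative commutant'' computation, and delivers $\algqbit'\cap\alga_1'=\{c\openone\}$ as a byproduct (closely tied to Lemma~\ref{lem3-1}), at the cost of leaning on $\alga=\algbofh$ and on the von Neumann machinery. The caveats you flag are exactly the right ones but pose no obstruction: the commutant of a generating set coincides with that of the von Neumann algebra it generates, and $2\theta$ for $\theta\in[0,\pi)$ does sweep a full period of the dual circle, so the multiplication symbol is forced to be constant.
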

\begin{proof}
Since $\algqbit\vee\alga_1=\overline{\algqbit\cup\alga_1}^{\sigma w}$, 
it is enough to show that two unitaries $U(\theta)$ and $V$ can be 
generated by $\algqbit$ and $\alga_1$. 
Straightforward calculations yield 
\begin{eqnArray} \nonumber \label{inverse}
U(\theta)&=&  \Exp{-\I\theta/2}U_1(\theta)(\openone\cos(\theta/2)-a_3\sin(\theta/2)),\\[1ex] 
V&=&a_+V_1+a_-. 
\end{eqnArray}
Since $\alga_0$ is generated by the Weyl pair $\{ U(\theta),V\}$, the relation $\alga_0\subset \algqbit\cup\alga_1$ holds. 
Converse inclusion $\alga_0\supset \algqbit\cup\alga_1$ holds 
from the fact that both $\algqbit$ and $\alga_1$ are subalgebra of $\alga$. 
Taking closure of $\alga_0= \algqbit\cup\alga_1$ proves this lemma. 
$\square$
\end{proof}

Combing the above results, we have the following lemma:
\begin{lemma}\label{lem2-3}
The commutant $\algqbit'$ is generated by two unitary 
operators $U_1(\theta)$ and $V_1$ as 
\begin{eqnArray}\nonumber
\algqbit'&=&\overline{\mathrm{span}_{\mathbb{C}}\{W_1(\theta,\ell) \,|\,\theta\in[0,\pi),\ell\in\mathbb{Z} \}}^{\sigma w},\\[1ex] 
\mathrm{with}&&W_1(\theta,\ell):=\Exp{-\I\ell 2\theta/2}U_1(\theta)V_1^{\ell}.
\end{eqnArray}
\end{lemma}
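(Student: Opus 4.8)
The plan is to sandwich $\algqbit'$ between two containments and close the loop. First I would establish $\alga_1 \subseteq \algqbit'$: this is the content of the ``special solution'' analysis preceding the lemma, where we found that $U_1(\theta)$ commutes with $a_\pm$ (and trivially with $a_3$ once the even-$\ell$ restriction is in force), and that $V_1 = V^2 = W(0,2)$ commutes with all $a_i$; since $\algqbit'$ is a von Neumann algebra (commutant of a self-adjoint set), it is $\sigma$-weakly closed, so it contains the whole generated algebra $\alga_1$. The reverse inclusion $\algqbit' \subseteq \alga_1$ is where the real work lies, and here I would invoke the commutant computation sketched at the start of Section~\ref{sec4}: commutativity with $a_3$ forces only even powers $V^{2\ell}$ to appear; factoring $W(\theta_j,2\ell) = W(\theta_j,0)V^{2\ell}$ and using that $V^{2\ell}$ is central reduces the problem to analyzing which combinations $\sum_j c_j U(\theta_j)$ commute with $a_\pm$; and the two-term analysis shows precisely that these are spanned by the $U_1(\theta)$. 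Hence any element of the dense subalgebra $\alga_0' := \alga' \cap (\text{finite spans})$ lying in the commutant is a finite combination of the $W_1(\theta,\ell)$, and taking $\sigma$-weak closure gives $\algqbit' \subseteq \overline{\mathrm{span}}^{\sigma w}\{W_1(\theta,\ell)\} = \alga_1$.

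An alternative and perhaps cleaner route, which I would actually prefer to present, is to derive the lemma directly from Lemma~\ref{lem2-2} together with a commutation theorem. From $\algqbit \vee \alga_1 = \alga = \algbofh$ and the (to-be-established, or forward-referenced) fact that $\algqbit$ and $\alga_1$ commute elementwise, one gets $\alga_1 \subseteq \algqbit'$ immediately. For the reverse inclusion one uses the tensor-product commutation theorem for von Neumann algebras: if $\alga \cong \algqbit \otimes \alga_1$ with $\algqbit$ a type $\mathrm{I}_2$ factor, then $(\algqbit \otimes \openone)' = \openone \otimes \alga_1$, i.e. $\algqbit' = \alga_1$. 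However, this reasoning is logically downstream of the main theorem of Section~\ref{sec4}, so to avoid circularity I would instead keep the self-contained computational argument of the previous paragraph and defer the slick tensor-product phrasing to a remark.

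The main obstacle I anticipate is the reverse inclusion $\algqbit' \subseteq \alga_1$, and within it the step that ``commuting with $a_\pm$ forces $\theta_1 = \theta_2 + \pi$ with the prescribed coefficient ratio.'' The subtlety is that a general element of $\algqbit'$ is a $\sigma$-weak limit, not a finite combination of Weyl elements, so the finite-support analysis must be promoted to the closure. I would handle this by first proving the statement on the norm-dense $*$-subalgebra $\alga_0$ — there the linear independence of the $W(\theta,\ell)$ (noted in Section~\ref{sec2}) lets us equate coefficients and run the two-angle argument rigorously — and then noting that both $\algqbit'$ and $\alga_1$ are $\sigma$-weakly closed with $\alga_1 \cap \alga_0$ $\sigma$-weakly dense in $\alga_1$, so the equality $\algqbit' \cap \alga_0 = \alga_1 \cap \alga_0$ lifts to $\algqbit' = \alga_1$. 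A secondary point to be careful about is bookkeeping of the $\ell$-grading: one must check that cross terms between different even values of $\ell$ cannot conspire to produce extra commuting elements, which follows because $U(\theta)$ conjugates $V^{2\ell}$ by the phase $\Exp{2\I\ell\theta}$ and these phases separate distinct $\ell$.
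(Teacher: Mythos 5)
Your main route is genuinely different from the paper's formal proof. The paper establishes $\alga_1\subseteq\algqbit'$ by construction and then disposes of the reverse inclusion abstractly: it sets $\algc=\algqbit'-\alga_1$ and argues from Lemma~\ref{lem2-2} that $\algqbit\vee\algqbit'=\alga\vee\algc\subseteq\alga$ forces $\algc=\emptyset$. The explicit two-angle computation (even powers of $V$ only, $\theta_1=\theta_2+\pi$, fixed coefficient ratio) appears in the paper only as motivating text \emph{before} the lemma and is not invoked in the proof itself. You instead promote that computation to be the proof. This is a defensible choice: the paper's set-difference argument, read literally, only yields $\algc\subseteq\alga$, which is vacuous since $\algqbit'\subseteq\alga$ anyway, so the inclusion $\algqbit'\subseteq\alga_1$ is really carried by the computation you foreground. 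Your rejected alternative (the commutation theorem $(\algqbit\otimes\openone)'=\openone\otimes\alga_1$) is indeed circular relative to Proposition~\ref{prop}, and you are right to set it aside.

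The genuine gap is in your lifting step. You claim that from $\algqbit'\cap\alga_0=\alga_1\cap\alga_0$ and $\sigma$-weak closedness of both sides one gets $\algqbit'=\alga_1$; but this only delivers $\alga_1=\overline{\alga_1\cap\alga_0}^{\sigma w}=\overline{\algqbit'\cap\alga_0}^{\sigma w}\subseteq\algqbit'$, i.e.\ the easy inclusion again. The missing ingredient is that $\algqbit'\cap\alga_0$ is $\sigma$-weakly dense in $\algqbit'$, which does not follow from density of $\alga_0$ in $\alga$: a $\sigma$-weak limit of finite Weyl combinations lying in the commutant need not be approximable by finite Weyl combinations that themselves lie in the commutant. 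A concrete repair: since $\algqbit\cong M_2(\bbc)$ is finite dimensional, averaging over its (compact) unitary group defines a normal conditional expectation $E(x)=\int u\,x\,u^{*}\,du$ from $\alga$ onto $\algqbit'$; conjugation by $a_1,a_2,a_3$ sends each $W(\theta,\ell)$ to a finite combination of Weyl operators, so $E(\alga_0)\subseteq\algqbit'\cap\alga_0$, and normality of $E$ then gives the required density. Alternatively, use that $\algqbit$ is a type $\mathrm{I}_2$ factor to split $\cH\cong\bbc^2\otimes\cK$ with $\algqbit'=\openone\otimes{\cal B}(\cK)$ and identify the right-hand side with $\alga_1$. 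Without one of these, the reverse inclusion is asserted rather than proved.
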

\begin{proof}
It is sufficient to show $\alga_1= \algqbit'$. 
The relation $\alga_1\subset \algqbit'$ holds by construction. 
Let $\algc=\algqbit'-\alga_1$ be the set of elements in $\algqbit'$ but not in $\alga_1$. 
Consider a union of two subalgebras $\algqbit$ and $\algqbit'$, 
then $\algqbit\cup\algqbit'=(\algqbit\cup\alga_1)\cup\algc=\alga\cup\algc$ 
holds from the lemma \ref{lem2-2}. Since $\algqbit\cup\algqbit'\subset\alga$ holds 
from the definition, we have the relation $\alga\supset\alga\cup\algc$. 
Thus, $\algc=\algqbit'-\alga_1=\emptyset\Leftrightarrow \algqbit'\subset\alga_1$ 
and $\alga_1= \algqbit'$ holds. 
$\square$
\end{proof}

The following important result follows from lemma \ref{lem2-2} and $\alga_1=\algqbit'$ (Lemma \ref{lem2-3}). 
\begin{corollary} \label{cor2-5}
(Totality of $\algqbit$ and $\algqbit'$)\\
$\algqbit\vee\algqbit'=\alga$ holds.  
\end{corollary}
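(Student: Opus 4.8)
The plan is to deduce Corollary \ref{cor2-5} directly from the two lemmas, essentially by a set-theoretic bookkeeping argument on the lattice of von Neumann subalgebras of $\alga$. First I would recall that $\algqbit\vee\algqbit'$ is by definition the smallest von Neumann algebra containing both $\algqbit$ and $\algqbit'$, i.e. $\algqbit\vee\algqbit'=\overline{\algqbit\cup\algqbit'}^{\sigma w}$. By Lemma \ref{lem2-3} we have the identification $\algqbit'=\alga_1$, so that $\algqbit\cup\algqbit'=\algqbit\cup\alga_1$, and therefore $\algqbit\vee\algqbit'=\algqbit\vee\alga_1$ as well. Lemma \ref{lem2-2} then gives $\algqbit\vee\alga_1=\alga$, and combining the two equalities yields $\algqbit\vee\algqbit'=\alga$, which is exactly the claim.

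Written out as a self-contained chain, the steps are: (i) invoke Lemma \ref{lem2-3} to replace $\algqbit'$ by $\alga_1$ inside the join; (ii) invoke Lemma \ref{lem2-2} for the resulting join $\algqbit\vee\alga_1$; (iii) conclude $\alga\subseteq\algqbit\vee\algqbit'$; and separately (iv) note the reverse inclusion $\algqbit\vee\algqbit'\subseteq\alga$, which is automatic since both $\algqbit$ and $\algqbit'$ are subalgebras of $\alga$ (the commutant $\algqbit'$ is a subset of $\alga$ by its very definition) and $\alga$ is $\sigma$-weakly closed, so it contains the $\sigma$-weak closure of $\algqbit\cup\algqbit'$. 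Antisymmetry of inclusion then gives the equality.

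I do not expect any genuine obstacle here, since Corollary \ref{cor2-5} is an immediate corollary of the two lemmas and carries no new mathematical content beyond the substitution $\algqbit'=\alga_1$. The only point that deserves a word of care is consistency of the notion of ``smallest algebra containing'': in Lemma \ref{lem2-2} the join $\algqbit\vee\alga_1$ is taken in the $\sigma$-weakly closed sense, matching the closure already built into the definitions of $\alga$, $\alga_1$, and $\algqbit'$, so the identifications $\algqbit\vee\algqbit'=\overline{\algqbit\cup\algqbit'}^{\sigma w}$ used above are literally the same operation applied in each lemma. With that understood, the proof is a two-line citation of Lemmas \ref{lem2-2} and \ref{lem2-3}.
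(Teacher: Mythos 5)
Your proposal is correct and follows exactly the paper's route: the paper derives Corollary \ref{cor2-5} precisely by combining Lemma \ref{lem2-2} with the identification $\alga_1=\algqbit'$ from Lemma \ref{lem2-3}. Your additional remarks on the reverse inclusion and the consistency of the $\sigma$-weak closure are harmless elaborations of the same two-line argument.
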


Another important property of commutant $\algqbit'$ is that it is $*$-isomorphic 
to the original angular momentum algebra $\alga$. This can be shown 
again by first constructing a $*$-homomorphism from 
$\alga$ to $\algqbit'$ and then showing faithfulness to obtain the following lemma.  
\begin{lemma}(Nest structure)\label{lem4-4}\\
The commutant $\algqbit'$ is $*$-isomorphic to 
the original algebra $\alga$, i.e., 
$\algqbit'\cong \alga$. 
\end{lemma}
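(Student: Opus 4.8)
The plan is to exhibit an explicit $*$-isomorphism between $\alga$ and $\algqbit'$ by matching generators. Since Lemma~\ref{lem2-3} shows $\algqbit'$ is the $\sigma$-weak closure of the span of the $W_1(\theta,\ell)$, built from the unitaries $U_1(\theta)$ (which is $\pi$-periodic, satisfying $U_1(\theta)U_1(\theta')=U_1(\theta+\theta')$) and $V_1=V^2$ (which satisfy the Weyl relation $U_1(\theta)V_1^\ell=\Exp{\I 2\ell\theta}V_1^\ell U_1(\theta)$), the natural candidate map $\pi_1:\alga\to\algqbit'$ is the one determined on generators by
\begin{equation}
\pi_1(U(\theta)):=U_1(\theta/2),\qquad \pi_1(V):=V_1.
\end{equation}
The rescaling $\theta\mapsto\theta/2$ absorbs the $\pi$-periodicity of $U_1$ against the $2\pi$-periodicity of $U$, and then $\pi_1(U(\theta))\pi_1(V)^\ell=U_1(\theta/2)V_1^\ell=\Exp{\I 2\ell(\theta/2)}V_1^\ell U_1(\theta/2)=\Exp{\I\ell\theta}\pi_1(V)^\ell\pi_1(U(\theta))$, which is exactly the Weyl relation (\ref{weyl}) that $U(\theta),V$ obey. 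Concretely this sends $W(\theta,\ell)\mapsto W_1(\theta/2,\ell)$, and one checks that the product and adjoint relations for the $W$'s transcribe verbatim into those for the $W_1$'s.

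The key steps, in order, are: (i) verify that $U_1(\theta/2)$ and $V_1$ satisfy precisely the same Weyl pair relations as $U(\theta)$ and $V$ — this is the computation just sketched, together with $U_1(\theta/2)^*=U_1(-\theta/2)$, $V_1^*=V_1^{-1}$, and the correct periodicities; (ii) conclude that the assignment on the dense subalgebra $\alga_0=\mathrm{span}_{\bbc}\{W(\theta,\ell)\}$ extends to a well-defined $*$-homomorphism onto $\mathrm{span}_{\bbc}\{W_1(\theta,\ell)\}$, using that the $W(\theta,\ell)$ are linearly independent (the property stated right before Definition~\ref{defamalge}) so there are no relations to check beyond the Weyl algebra structure, and that the $W_1(\theta,\ell)$ are likewise linearly independent so the map is injective on $\alga_0$; (iii) upgrade this bounded $*$-isomorphism between the norm-dense pre-algebras to a $\sigma$-weakly continuous $*$-isomorphism of the von Neumann algebra closures $\alga=\overline{\alga_0}^{\sigma w}$ and $\algqbit'=\overline{\mathrm{span}\{W_1\}}^{\sigma w}$; (iv) conclude faithfulness, i.e.\ $\mathrm{ker}(\pi_1)=\{0\}$, hence $\algqbit'\cong\alga$.

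Step (iii) is the main obstacle. A $*$-isomorphism defined on a weakly dense $*$-subalgebra need not extend continuously to the weak closure in general; what saves us here is that both sides are isomorphic as involutive algebras to the same abstract $*$-algebra over $\bbc$ generated by a rotor Weyl pair, and one can identify the extension abstractly — for instance by noting that the GNS/spatial structure transfers, or by invoking that a $*$-isomorphism between C$^*$-algebras is automatically isometric and, being implemented here spatially (both $\alga$ and $\algqbit'$ act on the same $\cH$, with $\algqbit'$ acting irreducibly on each of $\cH_{\mathrm e}$ and $\cH_{\mathrm o}$), extends to the $\sigma$-weak closures by continuity of the spatial action. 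Once the extension is in hand, injectivity is automatic since a nonzero $*$-homomorphism out of a simple von Neumann algebra (here $\alga\cong\algbofh$, which is a factor and in fact simple modulo the weak topology is $B(\cH)$ itself) has trivial kernel; alternatively one reads injectivity directly off linear independence of the $W_1(\theta,\ell)$. Finally, combining this lemma with Corollary~\ref{cor2-5} — totality of $\algqbit$ and $\algqbit'$ — sets up the tensor product decomposition $\alga\cong\algqbit\otimes\algqbit'\cong\algqbit\otimes\alga$, which is the iteration (\ref{ntensor}) advertised in the introduction.
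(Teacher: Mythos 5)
Your proposal is correct and follows essentially the same route as the paper: define the $*$-homomorphism on generators by rescaling the angle so that $W(\theta,\ell)\mapsto W_1(\theta/2,\ell)$, check the Weyl relations transfer, and deduce faithfulness from the linear independence of the $W_1(\theta,\ell)$. Your step (iii) is additional care the paper omits entirely (it passes to the $\sigma$-weak closure without comment), and your spatial-implementation argument is a reasonable way to close that gap.
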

\begin{proof}
The construction of $*$-homomorphism $\pi: \alga\to\algqbit'$ is straightforward 
by defining an angle $\theta_1=2\theta$ for $\algqbit'$. 
To show this mapping is faithful, we use the fact that 
the unitary $U_1(\theta)$ is linearly independent, 
i.e., $c_1U_1(\theta_1)+c_2U_1(\theta_2)=0$ for $\theta_1-\theta_2\neq 2\pi k$ ($k\in\mathbb{Z}$) 
implies $c_1=c_2=0$, and $V_1^{\ell}$ are also linearly independent, so as $W_1(\theta,\ell)$. 
Consider $\pi(\sum_{j,\ell:\mathrm{finite}}c_{j,\ell}W(\theta_j,\ell))
= \sum_{j,\ell:\mathrm{finite}}c_{j,\ell}W_1(\theta_j,\ell)=0$, 
this then implies all coefficients are zero, i.e., $a=0$ and $\forall a\in\alga$, $\pi(a)=0\Rightarrow a=0$ holds. 
$\square$
\end{proof}

To see the correspondence between $\alga$ and $\alga_1$, it is useful to 
define an angle $\theta_1=\theta /2$ for the algebra $\alga_1$ to make $U_1(\theta_1)$ $2\pi$ periodic rather than $\pi$. 
From the property of $U_1(\theta)$, $\{U_1(\theta_1)\,|\, \theta_1\in[0,2\pi)\}$ forms 
a unitary group. 
If the group $\{U_1(\theta_1)\,|\, \theta_1\in[0,2\pi)\}$ is (strongly) continuous, 
Stone's theorem applies and there exists a (unique) 
self-adjoint operator generating this group. We then define 
\begin{eqnArray}\nonumber
L_1&:=\displaystyle\lim_{\theta_1\to0}\frac{1}{\I\theta_1}(U_1(\theta_1)-\openone)\\[2ex] 
&=\displaystyle\frac{L}{2}-\frac{1}{4}(\openone-(-1)^{L}). 
\end{eqnArray}
Conversely, we express
\begin{equation}
U_1(\theta)=\Exp{\I2\theta L_1}=\Exp{\I\theta_1 L_1}=U_1(\theta_1). 
\end{equation}
We note this $L_1$ can be expressed using the floor function as $L_1=\floor{L/2}$. 
This expression agrees with our previous result which was obtained based on physical intuition \cite{RKSE10}. 
We however observe that this ``angular momentum'' operator $L_1$ 
needs to be carefully defined as being unbounded operator 
and it is only defined within a subset of  the Hilbert space $\cH$.

\subsection{Tensor product structure}
In this section, we shall establish our main result. 
We first show that the qubit subalgebra $\algqbit$ is a factor 
and we then construct $*$-isomorphism from $\algqbit\otimes\algqbit'$ to $\alga$ 
to conclude a tensor product structure $\alga\cong\algqbit\otimes\algqbit'$. 

\begin{lemma} \label{lem3-1}(Factor)\\
$\algqbit$ is a factor, i.e., the center is a multiple of the identity 
$\Leftrightarrow$ $Z(\algqbit):=\alga_{qbit}\cap\algqbit'=\{c\openone|c\in\mathbb{C}\}$.
\end{lemma}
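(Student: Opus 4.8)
The plan is to exploit the $*$-isomorphism $\algqbit\cong M_2(\mathbb{C})$ established above, together with the elementary fact that the centre of a full matrix algebra is trivial. The key observation is that $Z(\algqbit)=\algqbit\cap\algqbit'$ consists precisely of those elements of $\algqbit$ that commute with every element of $\algqbit$; since both factors of the intersection are taken relative to $\algqbit$ itself, this is just the abstract centre of $\algqbit$, and the ambient representation on $\cH$ plays no role. Transporting through the isomorphism, it therefore equals the image of $Z(M_2(\mathbb{C}))=\{cI\,|\,c\in\mathbb{C}\}$, which is $\{c\openone\,|\,c\in\mathbb{C}\}$, and $\algqbit$ is a factor.

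If one prefers a self-contained computation, I would take a general element $x=c_0\openone+c_1a_1+c_2a_2+c_3a_3\in\algqbit$ and impose $[x,a_i]=0$ for $i=1,2,3$; this suffices because $\openone,a_1,a_2,a_3$ span $\algqbit$ and $\openone$ is central. Using the Pauli-type relations $a_ja_k=\delta_{jk}\openone+\I\sum_{\ell}\epsilon_{jk\ell}a_\ell$ one computes, e.g., $[a_1,x]=2\I(c_2a_3-c_3a_2)$, and the linear independence of $a_1,a_2,a_3$ (a consequence of $\algqbit\cong M_2(\mathbb{C})$, equivalently of the linear independence of the $W(\theta,\ell)$) forces $c_2=c_3=0$; imposing in addition $[a_2,x]=0$ then forces $c_1=0$. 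Hence $x=c_0\openone$, so $Z(\algqbit)=\{c\openone\,|\,c\in\mathbb{C}\}$.

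There is no real obstacle: the statement is the $2\times2$ instance of the classical fact that $M_n(\mathbb{C})$ is simple with trivial centre, and the only point requiring a little care is the bookkeeping of the commutators together with the remark that testing commutativity against the three generators $a_1,a_2,a_3$ already pins down the centre. The reason the lemma is worth isolating is that a tensor-product factorisation $\alga\cong\algqbit\otimes\algqbit'$ of von Neumann algebras requires $\algqbit$ to be a factor, so this is the natural prerequisite for the main theorem of the section.
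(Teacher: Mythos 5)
Your proof is correct, and its computational half is essentially the paper's own primary argument: the paper likewise expands a central element as $c_0\openone+\sum_j c_j a_j$ and kills the coefficients by testing commutativity against the generators, though it does so less cleanly --- it asserts $c_1=c_2=0$ without displaying the commutators and then disposes of $c_3$ by a detour (arguing that $a_3=W(\pi,0)$ cannot lie in the span of the $U_1(\theta)$), whereas your single identity $[a_1,x]=2\I(c_2a_3-c_3a_2)$ together with linear independence of $\openone,a_1,a_2,a_3$ settles everything at once. Your first observation --- that $\algqbit\cap\algqbit'$ is precisely the abstract centre of $\algqbit$, so that the unital $*$-isomorphism $\algqbit\cong M_2(\bbc)$ of the earlier lemma carries $Z(M_2(\bbc))=\bbc I$ onto $Z(\algqbit)=\bbc\openone$ with no reference to the representation on $\cH$ --- is a genuinely shorter route that the paper does not take, and it makes the lemma an immediate corollary of the isomorphism. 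What you omit is the paper's alternative proof: starting from $\algqbit\vee\algqbit'=\alga$ (Corollary \ref{cor2-5}), take commutants and use $\alga'=\{c\openone\,|\,c\in\bbc\}$ together with $\algqbit''=\algqbit$ to get $(\algqbit\vee\algqbit')'=\algqbit'\cap\algqbit=\{c\openone\,|\,c\in\bbc\}$. That route is worth knowing because it is the one the paper later invokes (Section \ref{sec6}) to relate totality and the factor property for a general von Neumann subalgebra, where no explicit Pauli basis is available; your argument, by contrast, is the more elementary and self-contained one for the case at hand.
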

\begin{proof}
Consider an element $a\in \algqbit\subset Z(\algqbit)$, which is expanded as 
$a=c_0\openone+ \sum_{j=1,2,3}c_ja_j$. 
Since this also belongs to $\algqbit'$, $c_1=c_2=0$ must hold. 
To show $c_3=0$ is equivalent to show $a_3=W(\pi,0)$ cannot be 
expanded as a linear span of $U_1(\theta)$. 
Suppose if this happens, i.e., $W(\pi,0)=\sum_{j}c_jU_1(\theta_j)$, 
then the right hand side commutes with $a_{1,2}$ since 
$U_1(\theta)\in\algqbit'$. This contradicts with the property of $\algqbit$, and $c_3=0$ holds.  
$\square$\\
\textbf{Alternative proof:}\quad 
From  corollary \ref{cor2-5}, $\algqbit\vee\algqbit'=\alga$. 
Consider commutant of both side and use $\alga'=\{c\openone|c\in\mathbb{C}\}$ 
to get $(\algqbit\vee\algqbit')'=\algqbit'\cap\algqbit''=\algqbit'\cap\algqbit=\{c\openone|c\in\mathbb{C}\}$. 
$\square$
\end{proof}

We are ready to show the main result: 
\begin{proposition} \label{prop} (Tensor product structure)\\
For the algebra and subalgebra under consideration, $\alga\cong\algqbit\otimes\algqbit'$ ($*$-isomorphism) holds. 
\end{proposition}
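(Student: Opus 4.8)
The plan is to build an explicit $*$-isomorphism $\Phi\colon \algqbit\otimes\algqbit'\to\alga$ and verify it is well defined, multiplicative, $*$-preserving, and bijective, exploiting the three structural facts already established: $\algqbit\cong M_2(\bbc)$ (Lemma~1), $\algqbit\vee\algqbit'=\alga$ (Corollary~\ref{cor2-5}), and $\algqbit$ is a factor (Lemma~\ref{lem3-1}). First I would recall that for a type~I$_2$ factor $\algqbit$ sitting inside the von Neumann algebra $\alga=\algbofh$, the general commutation theorem for tensor products of von Neumann algebras gives $(\algqbit)'=\algqbit'$ and $(\algqbit\vee\algqbit')''=\algqbit\vee\algqbit'$; since $\algqbit$ is finite-dimensional, the map $a\otimes b\mapsto ab$ extends to a normal $*$-homomorphism from $\algqbit\otimes\algqbit'$ onto $\algqbit\vee\algqbit'=\alga$. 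Concretely, I would define $\Phi$ on elementary tensors by $\Phi(a\otimes b):=ab$ for $a\in\algqbit$, $b\in\algqbit'$, and extend linearly; this is forced by the requirement that $\Phi$ restrict to the given inclusions.

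The key steps, in order: (i) \emph{Well-definedness and multiplicativity.} Using the orthonormal basis $\{\openone,a_1,a_2,a_3\}$ of $\algqbit$ and that every element of $\algqbit$ commutes with every element of $\algqbit'$, write a general element of the algebraic tensor product as $\sum_{\mu} a_\mu\otimes b_\mu$ with $a_\mu$ running over the Pauli basis; then $\Phi(\sum_\mu a_\mu\otimes b_\mu)=\sum_\mu a_\mu b_\mu$, and multiplicativity $\Phi((a\otimes b)(a'\otimes b'))=\Phi(aa'\otimes bb')=aa'bb'=aba'b'=\Phi(a\otimes b)\Phi(a'\otimes b')$ follows from $[a',b]=0$. (ii) \emph{$*$-preservation}, immediate from $(ab)^*=b^*a^*=a^*b^*$. (iii) \emph{Surjectivity}: by Corollary~\ref{cor2-5}, $\alga_0=\algqbit\cup\alga_1$ in the sense of the proof of Lemma~\ref{lem2-2}, and the explicit formulas $U(\theta)=\Exp{-\I\theta/2}U_1(\theta)(\openone\cos(\theta/2)-a_3\sin(\theta/2))$ and $V=a_+V_1+a_-$ exhibit the generators $U(\theta),V$ of $\alga$ as images under $\Phi$ of elements of the algebraic tensor product; density of $\alga_0$ in $\alga$ under $\sigma$-weak topology together with normality of $\Phi$ then gives surjectivity onto all of $\alga$. (iv) \emph{Injectivity}: this is where the factor property does the work. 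Suppose $\Phi(\sum_\mu a_\mu\otimes b_\mu)=\sum_\mu a_\mu b_\mu=0$ with the $a_\mu$ the fixed Pauli basis and $b_\mu\in\algqbit'$. I would pair against the trace $\mathrm{Tr}_\psi$ of Eq.~(\ref{qubittr}) on the $\algqbit$ factor: for each fixed $\nu$, $\mathrm{Tr}_\psi(a_\nu^* (\sum_\mu a_\mu b_\mu)) = \sum_\mu \mathrm{Tr}_\psi(a_\nu^* a_\mu) b_\mu = 2\, b_\nu$ by orthogonality of the Pauli basis under this trace (here $b_\mu$ factors out of the partial trace because it lies in the commutant), forcing every $b_\nu=0$. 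Hence $\ker\Phi=\{0\}$ on the algebraic tensor product, and since a $*$-isomorphism of C$^*$-algebras is isometric, $\Phi$ extends to an isometric isomorphism on the completion; because $\alga$ is a von Neumann algebra and the von Neumann tensor product is unique, $\Phi$ is a $*$-isomorphism $\algqbit\otimes\algqbit'\xrightarrow{\ \sim\ }\alga$.

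The main obstacle I expect is step~(iv), and more precisely making the ``partial trace'' argument rigorous on the \emph{infinite-dimensional} commutant factor rather than on a finite-dimensional one: one must check that the slice map $\mathrm{id}_{\algqbit'}\otimes\mathrm{Tr}_\psi$ is well defined and normal, that it carries $\sum_\mu a_\mu b_\mu$ to $2b_\nu$ exactly, and that no convergence issue spoils the conclusion when one passes from the algebraic tensor product to its $\sigma$-weak closure. Equivalently, one can sidestep this by invoking the standard structure theorem: a von Neumann subalgebra that is a type~I$_n$ factor with $n<\infty$ induces a tensor splitting $\alga\cong M_n(\bbc)\otimes \algqbit'$ with $\algqbit'$ its relative commutant — but since the paper wants a self-contained physical argument, I would carry out the explicit slice-map computation, taking care that $b_\mu$ genuinely commutes past $a_\nu^*$ so the trace acts only on the qubit tensor factor. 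The surjectivity step is routine given the inversion formulas~(\ref{inverse}) already in hand; $*$-preservation and multiplicativity are one-line checks.
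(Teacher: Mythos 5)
Your construction of the map $\Phi(a\otimes b)=ab$, the verification of multiplicativity and $*$-preservation from commutativity of the two subalgebras, and the surjectivity argument via Corollary~\ref{cor2-5} and the inversion formulas~(\ref{inverse}) reproduce the paper's proof of Proposition~\ref{prop} essentially verbatim. Where you genuinely diverge is injectivity. The paper merely asserts that $U(\theta)$ and $V$ are represented \emph{uniquely} by elements of $\algqbit$ and $\algqbit'$ through Eq.~(\ref{inverse}) and concludes ``$\pi(a)=\pi(b)$ implies $a=b$''; taken literally this only says the generators have distinguished preimages, not that the kernel of the map on the whole tensor product vanishes. Your slice-map argument --- expanding a general element over the Pauli basis $\{\openone,a_1,a_2,a_3\}$ and extracting each commutant coefficient via $\mathrm{Tr}_\psi(a_\nu^*a_\mu)=2\delta_{\nu\mu}$ --- is the standard, more robust route, and the obstacle you flag (making the partial trace over the qubit factor rigorous) has a clean resolution: replace the slice map by the conditional expectation $E(x)=\frac14\sum_\mu a_\mu x a_\mu^*$ onto $\algqbit'$, which is defined and normal on all of $\alga$ and satisfies $E\bigl(a_\nu^*\sum_\mu a_\mu b_\mu\bigr)=b_\nu$, so $\sum_\mu a_\mu b_\mu=0$ forces every $b_\nu=0$ with no convergence issues. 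For the passage from the algebraic tensor product to its $\sigma$-weak closure, the quickest argument is that $\algqbit\bar{\otimes}\algqbit'$ is a factor (both tensor factors are), so any nonzero normal $*$-homomorphism out of it is automatically injective; the paper is silent on this step too. In short: same skeleton and same surjectivity proof, but your injectivity argument actually closes a gap that the paper's one-line justification leaves open.
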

\begin{proof}
Consider a tensor product of two algebras, $\algqbit$ on $\cH$ 
and $\alga_1$ on $\cH$ as $\algqbit\bar{\otimes}\algqbit'$. 
(Here, the tensor product $\bar{\otimes}$ represents the one for von Neumann algebra, 
which is uniquely defined by a closure of algebraic tensor product.)
We construct a $*$-homomorphism as follows. 
Define a bilinear mapping $\pi$: $\algqbit\bar{\otimes}\algqbit'\to\alga$ as follows.  
For each $m=m_0\openone+ \sum_{j}m_ja_j\in \algqbit$ and $b\in\algqbit'$, we assign 
\begin{equation}
\pi(m\bar{\otimes}b)=m b. 
\end{equation}
Since $\algqbit$ and $\algqbit'$ commute,  the image of 
$\algqbit\bar{\otimes}\algqbit'$ by $\pi$ is the smallest algebra 
containing both subalgebras. From corollary \ref{cor2-5}, we have
\begin{equation}
\pi(\algqbit\bar{\otimes}\algqbit')=\algqbit\vee\algqbit'=\alga,
\end{equation} 
that is, this mapping is surjective.  
To show it is also injective, we note that 
$U(\theta)$ and $V$ can be represented by $\algqbit$ and $\algqbit'$ uniquely as 
Eq.~(\ref{inverse}). Thus, $\pi(a)=\pi(b)$ implies $a=b$ for $\forall a,b\in\alga$, 
and $\pi$ is one-to-one. 
Therefore, the bilinear mapping is shown to be $*$-isomorphic, 
and this proves the proposition. 
$\square$
\end{proof}

The following corollary holds immediately by combining lemma \ref{lem4-4} and the above proposition \ref{prop}. 
This proves the second result of our paper; an iterative construction of arbitrary number of qubit subalgebras. 
\begin{corollary}(Iterative construction)\\
$\alga\cong\algqbit^{\otimes n}\otimes\alga$ holds for any $n\in\mathbb{N}$.
\end{corollary}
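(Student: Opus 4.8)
The plan is to establish $\alga\cong\algqbit^{\otimes n}\otimes\alga$ by induction on $n$, using Proposition \ref{prop} and Lemma \ref{lem4-4} as the two building blocks, together with the fact that $\bar\otimes$ is associative up to $*$-isomorphism for von Neumann algebras. The base case $n=0$ is trivial ($\alga\cong\alga$), and $n=1$ is exactly Proposition \ref{prop}: $\alga\cong\algqbit\bar\otimes\algqbit'$, combined with Lemma \ref{lem4-4}, $\algqbit'\cong\alga$, which yields $\alga\cong\algqbit\bar\otimes\alga$.

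For the inductive step, assume $\alga\cong\algqbit^{\otimes(n-1)}\bar\otimes\alga$. I would apply the $n=1$ isomorphism $\alga\cong\algqbit\bar\otimes\alga$ to the rightmost factor: since tensoring with a fixed von Neumann algebra preserves $*$-isomorphisms (functoriality of $\bar\otimes$ on each argument), one gets
\begin{equation}
\algqbit^{\otimes(n-1)}\bar\otimes\alga\cong\algqbit^{\otimes(n-1)}\bar\otimes(\algqbit\bar\otimes\alga)\cong\algqbit^{\otimes n}\bar\otimes\alga,
\end{equation}
where the last step is associativity of the von Neumann tensor product. Chaining this with the inductive hypothesis closes the induction. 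The key point making the chain of isomorphisms unambiguous is that the tensor product of von Neumann algebras is unique (it is the $\sigma$-weak closure of the algebraic tensor product in the tensor product representation), a fact already emphasized in the introduction; this is why we work with $\bar\otimes$ rather than a generic C$^*$-tensor product.

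The main obstacle — really the only nontrivial one — is justifying that $\bar\otimes$ is functorial and associative in the sense used above, i.e.\ that a $*$-isomorphism $\alga\to\algqbit\bar\otimes\alga$ lifts to a $*$-isomorphism $\cM\bar\otimes\alga\to\cM\bar\otimes(\algqbit\bar\otimes\alga)$ for any von Neumann algebra $\cM=\algqbit^{\otimes(n-1)}$, and that the resulting object is canonically $\algqbit^{\otimes n}\bar\otimes\alga$ independent of bracketing. These are standard structural facts about von Neumann tensor products, so I would simply cite them (e.g.\ \cite{Tbook}) rather than reprove them; the representation-theoretic content we actually need was already supplied in Lemma \ref{lem2-2} and Proposition \ref{prop}, where the explicit generators $U(\theta)$, $V$ are recovered from $\algqbit$ and $\algqbit'$ via Eq.~(\ref{inverse}). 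With those citations in place the corollary is an immediate bookkeeping consequence and needs no further calculation.
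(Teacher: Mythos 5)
Your proposal is correct and follows essentially the same route as the paper: the paper's proof also chains Proposition \ref{prop} with Lemma \ref{lem4-4} to get $\alga\cong\algqbit\otimes\alga$ and then iterates ("by repeating the same procedure"), which is exactly the induction you spell out. Your explicit appeal to functoriality and associativity of the von Neumann tensor product just makes precise a step the paper leaves implicit.
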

\begin{proof}
From proposition \ref{prop}, we have $\alga\cong\algqbit\otimes\algqbit'$. 
Lemma \ref{lem4-4} gives 
\begin{equation}
\alga\cong\algqbit\otimes \cA.
\end{equation} 
By repeating the same procedure we obtain 
\begin{equation}
\alga\cong\algqbit\otimes( \algqbit\otimes\cA)
\cong\dots\cong\algqbit^{\otimes n}\otimes\alga.
\end{equation}
$\square$
\end{proof}

\section{Experimental implementation} \label{implementation}
One way to realize angular momentum algebra is to use 
the orbital angular momentum (OAM) of single photons. 
There are good review articles on OAM, see Refs.~\cite{santamato,mttt07,FAP} and references therein. 
In this section, we shall use standard symbols and conventions in physics, 
such as $\dagger$ denotes a conjugate operator and $\ket{\psi_1,\psi_2}= \ket{\psi_1}\ket{\psi_2}=
\ket{\psi_1}\otimes\ket{\psi_2}$ for a tensor product. 
A logical qubit constructed by our qubit subalgebra is called an OAM qubit for simplicity in this section. 

\subsection{Basic optical components}
The available apparatus to manipulate OAM are listed as follows. 

{\bf i) Standard linear optical tools:} \ \ A beam splitter (BS) splits the OAM into two path with a given 
transmission coefficient. A mirror flips the sign of OAM, i.e., $\ket{\ell}\to\ket{-\ell}$. 
A phase shifter (PS) introduces a phase in OAM. Other components are 
a polarizing beam splitter (PBS), a quarter wave plate (QWP), and a half wave plate (HWP).  

{\bf ii) Hologram:} \ \ A hologram can add or subtract OAM by $\Delta \ell$ with a properly designed pattern, 
that is an ideal hologram is equivalent to the unitary operator: 
\begin{equation} \label{hgU}
U_{Hg}(q)=V^{q},
\end{equation} 
with $q$ integer.  

{\bf iii) Dove prism:} \ \ A Dove prism (DP) shifts the conjugate variable of the OAM by 
a certain angle in addition to the sign change of OAM. In the following we 
omit the sign change assuming that a mirror will be accompanied with the DP. 
We denote it by the unitary operator 
\begin{equation} \label{dpU}
U_{DP}=\Exp{\I \alpha L},
\end{equation}
where $\alpha$ 
is an angle between the incoming light and the DP axis and thus take values in $[0,2\pi)$. 
DPs can be used to sort the OAM with some modulo by inserting a DP in each arm 
of the Mach-Zehnder (MZ) interferometer, see Figure \ref{fig1} (a). 
This scheme, referred to as a DP sorter, provides us accessibility to the desired OAM qubit subsytem \cite{DP1,DP2}.  

{\bf iv) Q-plate:} \ \ Lastly, a Q-plate (QP) increases and decreases the OAM depending on 
the polarization of photon \cite{QP1,QP2}. This is expressed as the unitary operator: 
\begin{equation}
U_{QP}(q)=\ket{L}\bra{R}\otimes V^{2q}+ \ket{R}\bra{L}\otimes(V^{\dagger})^{2q},
\end{equation} 
where $\ket{R} (\ket{L})$ is right (left) circularly polarized state.  Thus it acts on 
the composite Hilbert space $\cH\otimes\cH_{pol}$ with $\cH_{pol}=\bbc^2$ the polarization degrees of freedom. 
Importantly, the parameter $q$ (called a charge) can be a half integer such as $1/2$ as well as an integer. 
As is shown later of this section, the QP can convert the polarization qubit to the OAM qubit deterministically. 

It is clear that the pair $U_{Hg}$ and $U_{DP}$ form the desired Weyl pair of quantum angular momentum algebra described 
in Sec.~\ref{sec2}. Suitable combinations of two unitaries can realize 
the qubit subalgebra described in Sec.~\ref{sec3}. 

\subsection{State preparation, single qubit gates, and measurements}
There are many ways to prepare a desired OAM qubit state. For instance, Ref.~\cite{MT02} 
describes how to prepare a (infinitely many) superposition of even (odd) OAM states 
which correspond to the state (\ref{psi0}). Based on the GNS construction, 
one can then construct any states by applying suitable operators. 
As we emphasized, the coefficients of created OAM states are irrelevant when 
the qubit structure is concerned. Other possibility is to create superposition of 
different OAM states using a non-integer spiral phase plate, 
where superposition of hundreds of different OAM modes was reported \cite{O05}. 

To prepare an arbitrary single qubit state, we apply the following universal set of 
single qubit gates. 

{\bf i) Phase gate:} \ \ The single qubit phase gate can be realized by 
first sorting the OAM into even and odd modes and then shifting the phase of 
the even mode only. To perform the phase gate operation deterministically, 
we need to use another DP sortor. 
Figure \ref{fig1} (b) shows a possible implementation of the phase gate $U(\phi)={\rm diag} (1,\exp(\I\phi))$. 

{\bf ii) Hadamard gate:} \ \ The Hadamard gate $H$ requires 
$X$, $\I Y$, and $Z$ operations in between the beam splitters, which 
can be implemented by a similar setting as the phase gate. For example, 
the NOT gate $X$ can be realized in the same setting as the single qubit 
phase gate with the replacement of PS by holograms ($\Delta \ell=\pm 1$) 
inserted in each arm. The schematic experimental setting is shown in Figure \ref{fig1} (c). 

{\bf iii) Measurements:} \ \ Measurements on the single photon OAM state can be carried out by 
the DP sorter and similar manners. 
Ref.~\cite{DP1,DP2} reported OAM measurements on a single photon level 
demonstrating measurement of OAM state modulo four and eight. 
We note that there are other way to sort and then to measure OAM states with modulo $2^k$ ($k\in\bbn$) \cite{BLCBP10}. 
\begin{figure}[htbp]
\begin{center}
\includegraphics[width=12cm,keepaspectratio,clip]{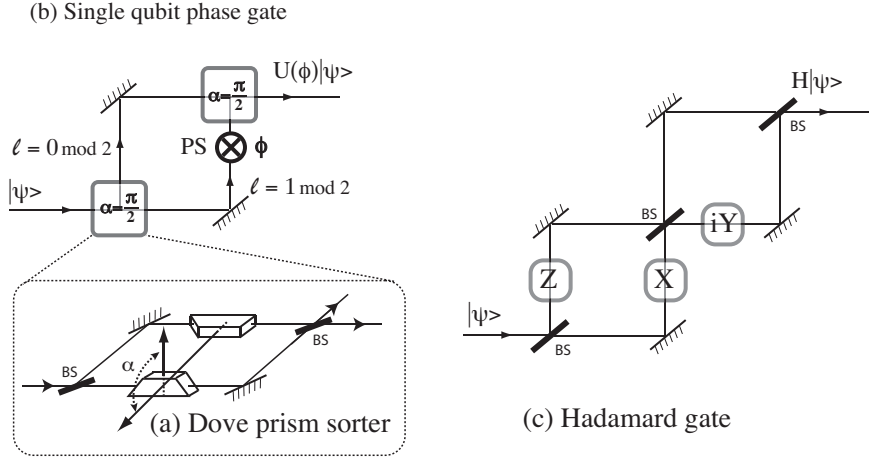}
\caption{(a) Dove prism (DP) sorter: The MZ interferometer with two DPs inserted to sort even and odd OAM 
states. (b) Single qubit phase gate. (c) The Hadmard gate using $X$, $\I Y$, and $Z$ operations.}
\label{fig1}
\end{center}
\end{figure}

\subsection{Equivalence to polarization qubit}
We briefly show that the OAM qubit is equivalent to 
other forms of photonic qubits encoded in the polarization of a single photon. 
Deterministic conversion from/to the dual rail qubit encoded 
in which path information can be shown similarly. 
The initial qubit state is encoded in the polarization as 
$\ket{\psi}_P=c_H\ket{H}+c_V\ket{V}$ where $\ket{H}$ and $\ket{V}$ are 
horizontal and vertical polarizations, respectively. The OAM qubit is assumed 
to be in the eigenstate of $a_3$:
\begin{equation}
\ket{\bar{1}}=\sum_{\ell\in\bbz}c_{\ell} \ket{2 \ell}, 
\end{equation}
where $\sum_{\ell\in\bbz}|c_{\ell}|^2=1$. 
The following orthogonal state, 
\begin{equation}
\ket{\bar{0}}=\sum_{\ell\in\bbz}c_{\ell} \ket{2 \ell-1},
\end{equation}
is expressed as $\ket{\bar{0}}=a_+\ket{\bar{1}}$ as discussed before. 
Deterministic conversion from the polarization qubit to the OAM qubit is realized by using the QP with the charge 
$q=1/2$ as shown in Figure \ref{fig2}. After passing through the quarter wave plate and 
the QP, a single photon state is sorted with a polarizing beam 
splitter depending upon the polarizations. Two arms are combined after the hologram 
and the half wave plate in each arm and pass through the MZ interferometer 
with two DPs inserted. The final state is 
\begin{equation}
\ket{\psi}_P\ket{\bar{1}}\longrightarrow\ket{H}(c_H\ket{\bar{0}}+c_V\ket{\bar{1}}).
\end{equation} 
The inverse process transfers the OAM qubit to the polarization qubit with probability one. 
\begin{figure}[htbp]
\begin{center}
\includegraphics[width=5.5cm,keepaspectratio,clip]{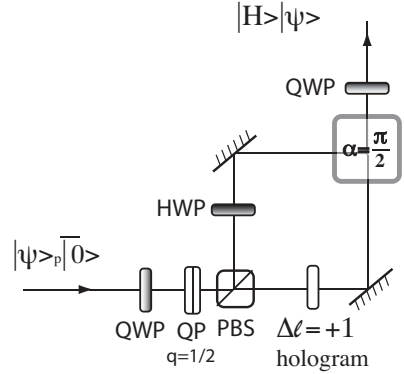}
\caption{Schematic description of deterministic conversion from 
a polarization qubit to the OAM qubit.}
\label{fig2}
\end{center}
\end{figure}

\subsection{Encoding many qubits}
It is straightforward to extend the previous discussion on the single qubit 
to multiple qubits encoded in a single OAM. The basic idea is to employ 
many MZ interferometers with DPs (the DP sorter) to sort OAM states into 
$\ell=0,1,\dots,2^n-1\; \mathrm{mod}{2^n}$. 
As an example, let us consider two OAM qubits. Figure \ref{fig3} (a) describes 
a sorting of OAM states into four arms depending on the values of angular momentum 
modulo four.  Single qubit gates on each logical qubit can be realized by 
applying the same method explained above. 

A nontrivial two qubit gate, controlled-phase gate, 
can be implemented by changing a phase of one of the four arms and then combining 
four arms into one with the aid of DP sorters as shown in Figure \ref{fig3} (b). This shows the universality 
of qubit gates for OAM qubits. 
Encoding and manipulations of many OAM qubits can be implemented similarly with more DP sorters. 
\begin{figure}[htbp]
\begin{center}
\includegraphics[width=13cm,clip]{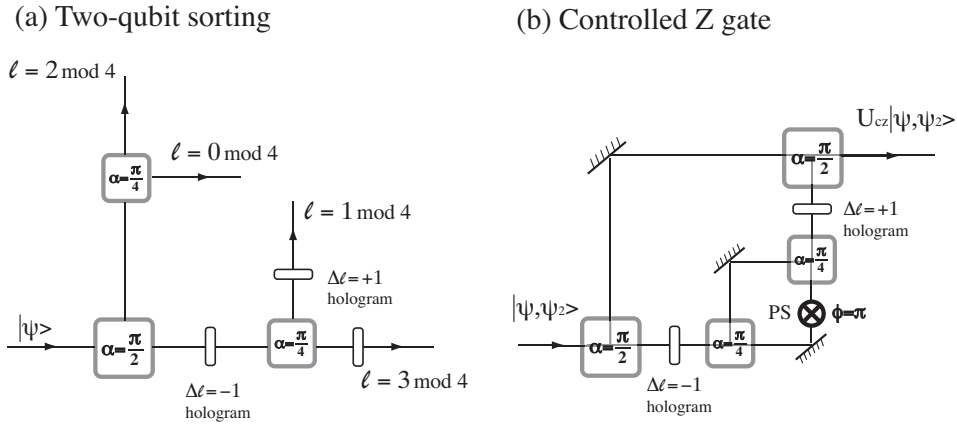}
\caption{(a) Sorting of OAM states into mod 4 states. (b) Schematic description of controlled Z gate.}
\label{fig3}
\end{center}
\end{figure}

In reality, of course, the issue of scalability needs to be addressed in order to claim for 
a realization of many qubits. The number of optical components grows quite rapidly 
as the number of logical qubits increases. The above mentioned two qubit controlled phase gate 
already demands more than six stable MZ interferometers. 
In this regard, a number of possible logical qubits is also limited by a size 
of experimental settings. A few qubits seem to be possible in the current technologies, 
and further experimental progress is needed to go more than, say, ten qubits.  

It is also possible to transfer many qubits encoded in the polarization or the dual rails 
to a single photon OAM qubits. Similar settings were studied in Ref.~\cite{GE} to transfer 
many dual rail qubits into subspaces of a single photon OAM. With the QP, it is 
possible to transfer many polarization qubits into a single photon OAM qubits as well. 
In both cases, transferring two optical qubits requires two optical CNOT gates. 
This is another advantage of OAM qubits: The ability for multiplexing transmission 
protocols with a single photon. 

\section{Extension of the proposed scheme}\label{sec6}
We briefly discuss possible extensions of our scheme and show outline for the results. 
\subsection{Qudit subalgebra}
Quantum system associated with $d$-dimensional complex Hilbert space is 
called a qudit, which can be defined as a $d\times d$ matrix algebra over 
the complex numbers $M_d(\mathbb{C})$. The same strategy applies to 
this case and one can construct a subalgebra which is $*$-isomorphic 
to $d$-dimensional matrix algebra. The key idea is to utilize the periodic projectors as follows. 

Let us consider a quantum angular momentum algebra generated by the Weyl pair $\{U(\theta),V\}$ 
and define a set of mutually orthogonal projectors: 
\begin{equation}
P^{(d)}_\ell:= \frac{1}{d}\sum_{j=0}^{d-1} \Exp{2\pi\I j (L-\ell)/d}=\frac{1}{d}\sum_{j=0}^{d-1} U\left(\frac{2\pi j}{d}\right)\Exp{-2\pi\I j\ell/d}.  
\end{equation}
Here the index $\ell$ takes $d$ different values, i.e., $\ell\in D=\{1,2,\dots, d\}$. 
Importantly, the projectors with 
different indices are orthogonal to each other and they are added to give the identity,
\begin{equation}
P^{(d)}_\ell P^{(d)}_{\ell'}=\delta_{\ell\ell'}P^{(d)}_\ell ,\quad\sum_{\ell=1}^{d}P^{(d)}_\ell=\openone . 
\end{equation}
These projectors correspond to the rank-1 diagonal elements of matrix algebra, 
which is symbolically represented by $\ket{\ell}\bra{\ell}$ ($\ell\in D$). 
This set of orthogonal projectors $P_d:=\{ P^{(d)}_j \}_{j\in D}$ can be used to define 
the desired subalgebra as follows. 

Define the following operator 
\begin{equation}
Q_{\ell\ell'}:=V^\ell P^{(d)}_d (V^*)^{\ell'}, 
\end{equation} 
and it is not difficult to show that 
it is identified with the element of the matrix algebra $\ket{\ell}\bra{\ell'}\in M_d(\mathbb{C})$. 
Thus, $Q_{\ell\ell'}$ form a basis for a $*$-subalgebra, and 
\begin{equation}
\cA_{qdit}:=\mathrm{span}_{\mathbb{C}}\{ Q_{\ell\ell'}\}_{\ell,\ell' \in D}, 
\end{equation}
constitutes the desired subalgebra which is $*$-isomorphic to $M_d(\mathbb{C})$. 

\subsection{Other infinite dimensional systems}
Another extension is to study other infinite dimensional systems. 
Our preliminary result shows similar argument applies to 
the system of quantum harmonic oscillator, where a starting Weyl algebra 
is different from the usual Weyl algebra for CCR. The fundamental operator 
is defined in terms of annihilation operator  $a$, creation operator $a^*$, 
and the number operator $N=a^* a$ as 
\begin{equation*}
W(\theta,z)=\Exp{\I \theta N/2} \Exp{ \bar{z} a-{z}a^*}\Exp{\I \theta N/2},
\end{equation*}
where $\theta\in \mathbb{R}$ and $z\in\mathbb{C}$. 
It satisfies 
\begin{eqnArray}\nonumber
W(\theta_1,z_1)W(\theta_2,z_2)&=&
\Exp{\sigma(\theta_1,z_1;\theta_2,z_2)}
W(\theta_1+\theta_2,z_1\Exp{-\I \theta_2/2}+z_2\Exp{\I \theta_1/2})\\[1ex]  
\sigma(\theta_1,z_1;\theta_2,z_2)&=&\mathrm{Im}(z_1\bar{z_2}\Exp{\I(\theta_1+\theta_2)/2})\\[1ex] 
\nonumber W(\theta,z)^*&=&W(-\theta,-z).
\end{eqnArray}

A similar construction can be realized for a qubit subalgebra out of the above algebra generated by $W(\theta,z)$. 
The main difference from the case of quantum angular momentum algebra discussed in detail is that 
we do not have unitary shift operator for the case of harmonic oscillator. Instead, we 
deal with an isometric operator which can be defined by the polar decomposition of the 
annihilation operator $a$. This additional element complicates the analysis, 
but the similar calculations hold. 
The details of harmonic oscillator case will be presented in future publication. 

\subsection{General structure}
In this section, we shall discuss the general structure of qubit subalgebra and tensor product between 
it and its commutant for a given $*$-algebra. 

Suppose a $*$-algebra $\cA$ on the Hilbert space $\cH$ is given, 
where $\cH$ is possibly infinite dimensional. 
We first construct a $*$-subalgebra $\cA_{qbit}\subset\cA$ such that 
i) $\cA_{qbit}$ contains the identity of $\cA$ and $\cA_{qbit}$ is 
$*$-isomorphic to the two-dimensional matrix algebra $M_2(\bbc)$. 
The latter condition seems to be satisfied if the original Hilbert space 
is separable and we have a CONS for it. 
The next step is to show that iii) the subalgebra and its commutant $\cA_{qbit}'$ 
span the whole algebra $\cA$, i.e., $\cA_{qbit}\vee \cA_{qbit}'=\cA$. 
As we saw in the proof of lemma \ref{lem3-1}, this condition is 
equivalent to iv) $\cA_{qbit}$ is a factor of $\cA$, i.e., 
$Z(\algqbit)=\alga_{qbit}\cap\algqbit'=\{c\openone|c\in\mathbb{C}\}$. 
This is true when the original algebra is a von Neumann algebra 
and the double commutant theorem holds. This additional assumption 
was used throughout our discussion. 

The above conditions i), ii), iii) (or i), ii), iv)) are sufficient to show that 
there exists a bilinear and faithful $*$-homomorphism from $\alga_{qbit}\otimes\alga_{qbit}'$ 
to a set of bounded operators on $\cH$ denoted as $\algbofh$ such that 
the image of $\alga_{qbit}\otimes\alga_{qbit}'$ coincides with $\alga$. 
This was proven in proposition \ref{prop} without relying on the detail structure 
of the algebra. One can also show that there are various equivalent conditions 
which purport the tensor product structure between a subalgebra which is 
$*$-isomorphic to a matrix algebra and its commutant \cite{nagaoka}.

The additional ingredient in our discussion is a simple fact that 
v) the commutant is $*$-isomorphic to the original algebra, i.e., 
$\alga_{qbit}'\cong\cA$. This is possible only if the Hilbert space is infinite dimensional. 
We have not found a simple criterion which guarantees this nest structure in terms of a given algebra. 
This will be analyzed in due course and shall be presented in elsewhere. 
If all conditions i)-v) are satisfied, then one can show the following simple construction works:
\begin{equation} \label{ntensor}
{\cal A}\cong \underbrace{{\cal A}_{qbit} \otimes{\cal A}_{qbit} \otimes\cdots\otimes{\cal A}_{qbit}}_n \otimes {\cal A}.
\end{equation}

\section{Summary}
We have shown that the Weyl algebra of quantum angular momentum can 
be decomposed into a tensor product of two algebras $\algqbit$ and $\algqbit'$. 
Here $\algqbit$ is $*$-isomorphic to two-dimensional matrix algebra, which 
we call a qubit algebra, whereas $\algqbit'$ is a commutant of $\algqbit$. 
Since the commutant is isomorphic to the original algebra, i.e., $\algqbit'\cong\alga$, 
we can iterate this construction to have arbitrary numbers of qubit subsystems. 
This paper reveals algebraic properties of these subalgebras and 
we have justified our previous result which was obtained based on physical intuition.  
We have also discussed a possible realization of these qubit subalgebras using 
orbital angular momentum of photons. In the last section, several extensions are 
outlined and this suggests that the similar constructions work for other algebras as well.

\ack
The author would like to thank Philippe Raynal, Amir Kalev, and Berge Englert 
for fruitful collaboration at early stage of this work. 
He acknowledges Prof. Hiroshi Nagaoka and Prof. Tomohiro Ogawa for 
helpful discussions on mathematical aspect of the problem. 

\appendix
\section*{Appendix. GNS construction}
\setcounter{section}{1} 
Here we give a short account on the GNS construction of representation 
for a $*$-algebra. See Ref.~\cite{araki,Tbook,UOHbook} for more details. 

For a given $*$-algebra $\cA$ and a state $\omega$, 
that is a linear functional such that 
\begin{eqnArray}\nonumber
\omega(c_1 a_1+c_2 a_2)&=&c_1\omega(a_1)+c_2\omega(a_2),\\[1ex] \nonumber
\omega(a^*a)&\ge& 0,\\[1ex] \nonumber
\omega(\openone)&=&1,
\end{eqnArray}
hold for any $c_j\in\bbc$ and any $a_j,a\in\cA$, 
we define a kernel of the linear functional $\omega$: 
\begin{equation}\nonumber
\cK_\omega:= \{a\in\cA\,|\,\omega(a^*a)=0 \}.
\end{equation}
An equivalent relation and equivalent class based on the kernel are 
introduced: 
\begin{eqnArray}\nonumber
a\sim_{\omega} b&\stackrel{{\rm def}}{\Leftrightarrow}& a-b\in\cK_\omega,\\[1ex] \nonumber
[a]_{\omega}&:=& \{ b\in\cA\,|\, a-b\in\cK_\omega \}.
\end{eqnArray}
It is straightforward to see that $\cK_\omega$ is a linear subspace of $\cA$ 
and is also a left ideal of $\cA$. 
The quotient algebra, 
\begin{eqnArray}\nonumber
\cD_\omega&:=& \cA/\cK_{\omega} \\[1ex]\nonumber 
&=&\left\{[a]_\omega \,|\,a\in\cA \right\},
\end{eqnArray}
is a subalgebra which 
consists of equivalent classes of the algebra. 

We introduce the following sesquilinear form $\cD_\omega\times\cD_\omega\to\bbc$ by
\begin{equation}\nonumber
\forall [a]_\omega,[b]_\omega\in\cD_\omega,\quad 
\expectn{[a]_\omega,[b]_\omega}_\omega:= \omega(a^*b).
\end{equation}
Owing to the equivalent relation, one can verify that 
this definition does not depend on representatives of the equivalent class and 
this defines an inner product on $\cD_\omega$. A norm 
on $\cD_\omega$ is also defined as 
\begin{equation}\nonumber
|| a ||_\omega:= \sqrt{\expectn{a,a}_\omega}=\sqrt{\omega(a^*a)}. 
\end{equation}
By completion of the inner product space $\cD_\omega$ with respect 
to the above norm, we construct a Hilbert space:
\begin{equation}\nonumber
\cH_\omega:= \overline{\cD_\omega}.
\end{equation}

For a separable $*$-algebra, the Hilbert space is also separable 
and we can find a CONS for $\cH_\omega$ 
which is denoted by $\{ [e^i]_\omega\}_{i}$. 
Since the CONS is independent of choice of representatives, 
we also denote it as $\{ e^i_\omega \}_{i}$. 

The next step is to introduce a representation $\pi_\omega: \cA\to \cH_\omega$ by
\begin{equation}\nonumber
\pi_\omega(a) [b]_\omega= [ab]_\omega, 
\end{equation}
for all $[b]_\omega\in\cH_\omega$. 
This definition is again independent of representatives and hence it 
defines a unique representation (up to unitary equivalence) on the Hilbert space $\cH_\omega$. 
Having the CONS on $\cH_\omega$, it suffices to determine 
the action of $\pi_\omega(a)$ on $e^i_\omega$, i.e., 
the coefficients $c^i_j(a)$ 
\begin{eqnArray} \nonumber
&\pi_\omega(a)e^i_\omega= [ae^i_\omega]_\omega=\sum_j c^i_j(a) e^j_\omega,  \\[1ex] \nonumber
\Leftrightarrow &\  c^i_j(a)= \expectn{e^j_\omega, [ae^i_\omega]_\omega}_\omega 
=\omega\big((e^j_\omega)^* ae^i_\omega\big),
\end{eqnArray}
determine the representation for $\cA$. This representation 
is specified by the doublet $(\cH_\omega, \pi_\omega)$ and are called as 
a GNS representation.

\section*{References}

\end{document}